\pdfoutput=1
\documentclass[a4paper]{article}
\usepackage[T1]{fontenc}
\usepackage{fullpage}

\usepackage{amsmath, amsfonts, amssymb, mathtools} 
\usepackage{amsthm} 
\usepackage{hyperref} 
\usepackage{cleveref} 
\usepackage{booktabs, multirow} 
\usepackage{diagbox} 
\usepackage{lmodern}
\usepackage{authblk}
\usepackage{graphicx}

\theoremstyle{plain}%
\newtheorem{theorem}{Theorem}

\newtheorem{problem}{Problem}
\newtheorem{corollary}{Corollary}

\theoremstyle{remark}%
\newtheorem{remark}{Remark}%

\theoremstyle{definition}%
\newtheorem{lemma}{Lemma}
\newtheorem{definition}{Definition}%
\newtheorem{observation}{Observation}
\crefname{observation}{observation}{observations}
\crefname{observation}{Observation}{Observations}

\DeclareMathOperator{\poly}{poly}

\DeclareMathOperator{\MS}{MS}

\newcommand{\LRSP}{\textsf{LRSP}}
\newcommand{\SMLG}{\textsf{SMLG}}
\newcommand{\OVP}{\textsf{OVP}}
\newcommand{\OVH}{\textsf{OVH}}
\newcommand{\LCSP}{\textsf{LCSP}}
\newcommand{\MSP}{\textsf{MSP}} 

\newbox{\myorcidaffilbox}
\sbox{\myorcidaffilbox}{\large\includegraphics[height=1.7ex]{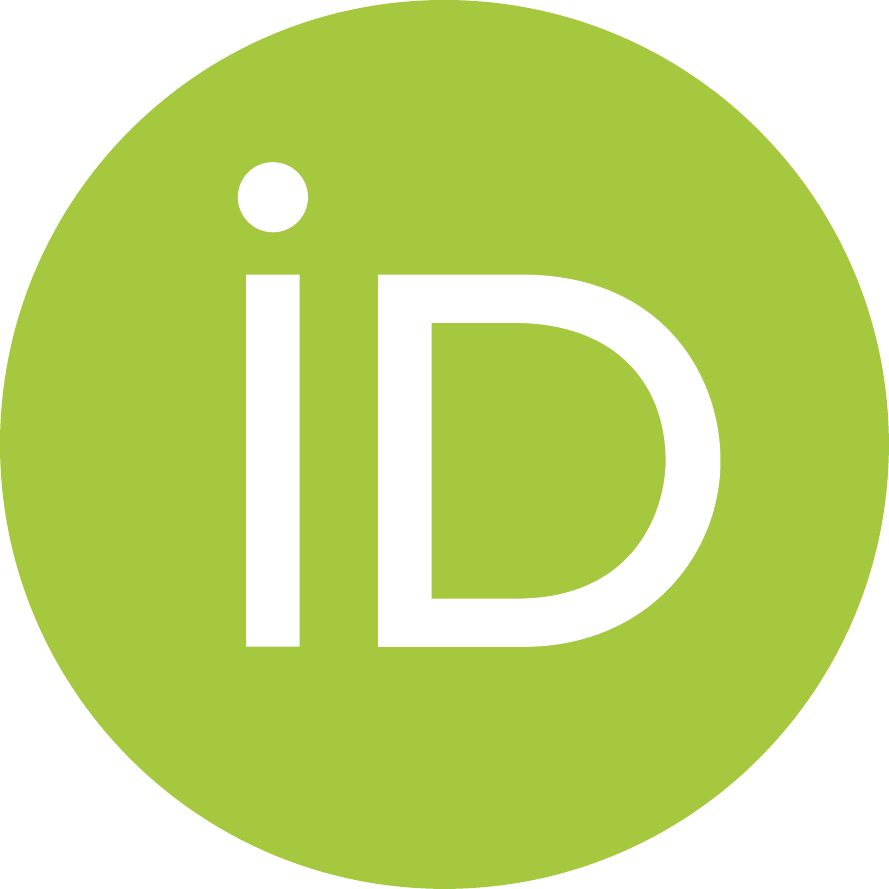}}
\newcommand{\orcidaffil}[1]{%
  \href{https://orcid.org/#1}{\usebox{\myorcidaffilbox}\,#1}}

\title{The Labeled Direct Product Optimally Solves \\ String Problems on Graphs}

\author[1]{Nicola Rizzo}
\author[2]{Alexandru I.\ Tomescu}
\author[3]{Alberto Policriti}
\affil[1]{Department of Computer Science, University of Helsinki, Finland, \texttt{nicola.rizzo@helsinki.fi}, \orcidaffil{0000-0002-2035-6309}}
\affil[2]{Department of Computer Science, University of Helsinki, Finland, \texttt{alexandru.tomescu@helsinki.fi}, \orcidaffil{0000-0002-5747-8350}}
\affil[3]{Department of Mathematics, Computer Science and Physics, University of Udine, Italy, \texttt{alberto.policriti@uniud.it}}
\date{}

\begin{document}

\maketitle

\begin{abstract}
Suffix trees are an important data structure at the core of optimal solutions to many fundamental string problems, such as \emph{exact pattern matching}, \emph{longest common substring}, \emph{matching statistics}, and \emph{longest repeated substring}.
Recent lines of research focused on extending some of these problems to vertex-labeled graphs, although using ad-hoc approaches which in some cases do not generalize to all input graphs.

In the absence of a ubiquitous tool like the suffix tree for labeled graphs, we introduce the labeled direct product of two graphs as a general tool for obtaining optimal algorithms: we obtain conceptually simpler algorithms for the quadratic problems of string matching (\SMLG) and longest common substring (\LCSP) in labeled graphs. Our algorithms are also more efficient, since they run in time linear in the size of the labeled product graph, which may be smaller than quadratic for some inputs, and their run-time is predictable, because the size of the labeled direct product graph can be precomputed efficiently.
We also solve \LCSP\ on graphs containing cycles, which was left as an open problem by Shimohira et al.\ in 2011.

To show the power of the labeled product graph, we also apply it to solve the matching statistics (\MSP) and the longest repeated string (\LRSP) problems in labeled graphs.
Moreover, we show that our (worst-case quadratic) algorithms are also optimal, conditioned on the Orthogonal Vectors Hypothesis. Finally, we complete the complexity picture around \LRSP\ by studying it on undirected graphs. \\[2ex]
\textbf{Keywords:} Longest repeated substring, Longest common substring, Matching statistics, \\ String algorithm, Graph algorithm, Motif discovery, Fine-grained complexity
\end{abstract}

\paragraph{Acknowledgments} We are very grateful to Roberto Grossi, for initial discussions on the longest repeated string problem that spurred this line of research, and to Veli M\"akinen and Massimo Equi for their helpful comments and many discussions on the results of this paper.
\paragraph{Funding} This work was partially supported by the European Research Council (ERC) under the European Union's Horizon 2020 research and innovation programme (grant agreement No.~851093, SAFEBIO) and by the Academy of Finland (grants No.~322595, 328877).
\section{Introduction}
\label{sec:introduction}

Motivated by various application domains appearing during the last decades, a significant branch of string algorithm research has focused on extending string problems from texts to more complex objects, such as labeled rooted trees (e.g.~modeling XML documents~\cite{DBLP:conf/focs/FerraginaLMM05}) and labeled graphs (e.g.~modeling pan-genome graphs~\cite{VariationGraph2018,Sch09}). For example, the \emph{string matching in labeled graphs} (\SMLG) problem asks to find an \emph{occurrence} of a given string $S$ inside a labeled graph $G$, that is, a walk of $G$ whose concatenation of vertex labels (\emph{spelling}) is $S$. On rooted trees, \SMLG\ can be solved in linear time~\cite{DBLP:conf/cpm/Akutsu93}, but on general graphs it admits both quadratic-time conditional lower bounds~\cite{BI16,DBLP:conf/icalp/EquiGMT19,EquiMT21,DBLP:conf/sosa/GibneyHT21} and optimal algorithms of matching time complexity~\cite{AmirLL00,RautiainenMarschall17,JZGA19}. 

Despite this active interest in the \SMLG\ problem, the graph extensions of three other fundamental string problems have received none or little attention so far: \emph{longest common substring}, \emph{matching statistics}, \emph{longest repeated substring}. On strings, the former two problems can also be seen as relaxations of the exact string matching problem (for e.g.~handling approximate matching)~\cite{Gus97,DBLP:books/cu/MBCT2015}, and all problems can be seen as basic instances of \emph{{pattern/motif} discovery in strings}~\cite{Par07}. In this paper we consider their natural generalizations to \emph{$\Sigma$-labeled graphs}, namely to tuples $G = (V,E,L)$, with $V$ and $E$ the sets of vertices and edges, respectively, and $L \colon V \to \Sigma$ assigning to each vertex a \emph{label} from $\Sigma$ (the original string problems can be obtained by taking all graphs to be labeled paths).

\begin{problem}[Longest common string problem (\LCSP)]
	Given $G_1$, $G_2$ $\Sigma$-labeled graphs, find a longest string $S$ occurring in $G_1$ and in $G_2$.
\end{problem}

\begin{problem}[Matching statistics problem (\MSP)]
	Given $G_1$, $G_2$ $\Sigma$-labeled graphs, compute for every vertex $v$ of $G_1$ the length $\MS(v)$ of a longest walk of $G_1$ starting at $v$ whose spelling has an occurrence in $G_2$.
\end{problem}

\begin{problem}[Longest repeated string problem (\LRSP)]
	Given a $\Sigma$-labeled graph $G$, find a longest string $S$ having at least two distinct occurrences in $G$.
\end{problem}

\begin{figure}[t]
	\centering%
	\includegraphics[scale=1]{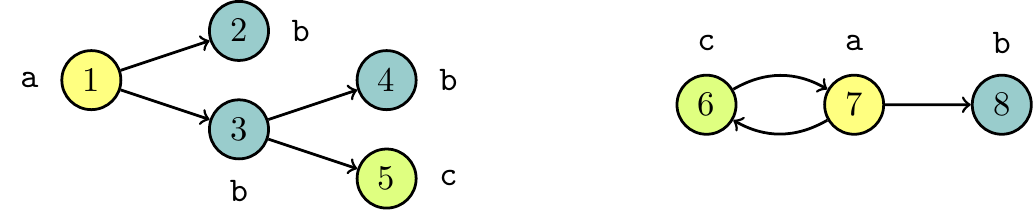}
	\caption{An example of an $\lbrace \mathtt{a}, \mathtt{b}, \mathtt{c} \rbrace$-labeled graph made up of two components: the longest common string between the two components is \texttt{ab}, and the longest repeated string of the whole graph is \texttt{ab} as well, since all longer strings spelled by some walk have exactly one occurrence. Taking the right component as $G_1$ and the left component as $G_2$, we have that $\MS(6) = 1$, $\MS(7) = 2$ and $\MS(8) = 1$.}\label{fig:example}
\end{figure}

When defined on strings, all problems can be solved in linear time and space as basic textbook applications of the suffix tree \cite{Gus97,CR02}, or of the suffix array and the longest common prefix (LCP) array~\cite{Ohl13,DBLP:journals/csur/PuglisiST07}, under the standard assumption to be working with an \emph{integer alphabet}, i.e.\ containing integers from a range that is linear-sized with respect to the input.
On labeled graphs, only \LCSP\ has been considered by Shimohira et al.~\cite{DBLP:conf/stringology/ShimohiraIBT11}. They solved it in time $O( \lvert E_1 \rvert \cdot \lvert E_2 \rvert)$, where $E_1$ and $E_2$ are the edge sets of $G_1$ and $G_2$, respectively, and only if one of the graphs is acyclic, and they left the general case of two cyclic graphs as an open problem~\cite{DBLP:conf/stringology/ShimohiraIBT11}. Moreover, there are no known analogous algorithms for \MSP\ and \LRSP, and no characterization of the possible solutions for \LRSP. Regarding hardness, note that the $O( \lvert E_1 \rvert \cdot \lvert E_2 \rvert)$-time algorithm of~\cite{DBLP:conf/stringology/ShimohiraIBT11} for \LCSP\ is optimal under the same conditional lower bounds as for \SMLG~\cite{BI16,DBLP:conf/icalp/EquiGMT19,EquiMT21,DBLP:conf/sosa/GibneyHT21}, since the decision version of \SMLG\ (i.e.~whether there is an occurrence of the string) is linear-time reducible to \LCSP. In fact, the same holds also for \MSP. Nevertheless, there exists no analogous lower bound for \LRSP. Note that the three problems defined with occurrences as directed paths, i.e.\ visiting each vertex at most once, are NP-complete (\Cref{thm:np-path}), and hence in this paper we consider walk occurrences.

\SMLG, \LCSP\ and \LRSP\ have connections also to Automata Theory, since the spellings of all walks of a finite labeled graph form a regular language. Indeed, one can transform a labeled graph into an NFA by making every vertex a final state, adding a new initial state connected to all vertices, and moving the label of each vertex on each incoming edge: a string occurs in a labeled graph if it is accepted by its corresponding NFA; strings common to two labeled graphs correspond to strings accepted by the intersection of their corresponding NFAs; repeated strings of a labeled graph correspond to \emph{ambiguous words} of the resulting NFA, namely words having at least two accepting computations. 
\emph{Ambiguity} of automata (or its lack thereof) has been studied in the context of Descriptional Complexity Theory \cite{DBLP:journals/actaC/HanSS17,DBLP:conf/dcfs/Colcombet15,DBLP:journals/jucs/GoldstineKKLMW02,DBLP:journals/tc/BookEGO71}, not to be confused with Descriptive Complexity Theory. For example, the \emph{degree of ambiguity} of an NFA is the maximum number of accepting computations of any word by the automaton. While there are works about studying upper bounds of such metric~\cite{DBLP:journals/tcs/WeberS91,DBLP:journals/ijfcs/AllauzenMR11}, to the best of our knowledge there is no research on the \emph{longest} ambiguous words of an NFA.

As a first result on labeled graphs, we observe that on labeled directed trees (i.e.~rooted trees with all edges oriented away from the root) \LRSP\ and \LCSP\ can also be solved in linear time and space as an easy application of the tree counterparts of the suffix tree and the suffix array: the \emph{suffix tree of a tree}~\cite{DBLP:conf/focs/Kosaraju89a} and the \emph{XBW transform of a tree} (XBWT)~\cite{DBLP:conf/focs/FerraginaLMM05}.
The former, introduced by Kosaraju in 1989, generalizes the suffix tree to represent all suffixes of the strings spelled by the upwards paths of a given tree and admits linear-time construction algorithms~\cite{DBLP:journals/tcs/Breslauer98,DBLP:journals/ieiceta/Shibuya03}. The latter, introduced by Ferragina et al.\ in 2005, is an invertible transform, also computable in linear time, encoding a tree as an ordered list of elements each corresponding to a vertex: the order of these elements first follows the lexicographical ordering of the unique path from the parent of the corresponding vertex to the root of the tree, then the pre-order visit of the tree. \LRSP\ of a tree can be solved directly by either structure in the same way as \LRSP\ of a string, while \LCSP\ of two trees can be solved with a simple adaptation of either structure.

In this paper we introduce the \emph{labeled direct product} of two labeled graphs $G_1$ and $G_2$, denoted $G_1 \otimes G_2$, inspired from e.g.~the Cartesian product construction for the intersection language accepted by two finite state automata (\Cref{sec:labeleddirectproduct}). While not a completely novel idea, this product cleanly encodes each and every pair of walks of the input graphs spelling the same string and it appears as the right conceptual tool to \emph{optimally} solve string problems on graphs, in the same way as the suffix tree is a ubiquitous tool for optimal algorithms on text. Our results are as follows.

\subsection{Conceptually simpler and more efficient algorithms}
The current state-of-the-art algorithm for \SMLG\ was introduced in 1997 by Amir et al.~\cite{AmirLL00} in the context of hypertexts (i.e.\ directed graphs such that each vertex is labeled with a string). Given a $\Sigma$-labeled graph $G = (V,E,L)$ and a string $S = S[1]\cdots S[m] \in \Sigma^*$, the algorithm works by constructing a directed acyclic graph (DAG) $G'$ having vertex $v^i$ for each vertex $v \in V$ and for each position $i \in \lbrace 1, \dots, m \rbrace$, such that there is an edge between two vertices $v^i$, $w^{i+1}$ if $L(v) = S[i]$ and $(v,w) \in E$: \SMLG\ is then solved by finding and reporting a path of length $\lvert S \rvert - 1$ in $G'$. Instead, by treating the pattern $S$ as a labeled path $G_S$, we can solve \SMLG\ by simply finding a path of length $\lvert S \rvert - 1$ in $G \otimes G_S$ (\Cref{sec:labeleddirectproduct}). Since $G_S$ is a path, then $G \otimes G_S$ is a DAG, and thus such a path can be found in time linear in the size of $G \otimes G_S$. Our labeled product is a subgraph of the DAG $G'$ by Amir et al.: $G'$ considers mismatching vertices $v^i$ such that $L(v) \neq S[i]$ and it avoids computing the edges \emph{from} mismatching vertices but not the (remaining) edges \emph{to} mismatching vertices. Thus, their algorithm always takes time $\Omega( \lvert V \rvert \cdot \lvert S \rvert )$, even when $G \otimes G_S$ has smaller size, and takes time $\Theta( \lvert E \rvert \cdot \lvert S \rvert )$ for some families of inputs where $G \otimes G_S$ has smaller size.

Moreover, \LCSP\ on DAGs $G_1 = (V_1,E_1,L_1)$ and $G_2=(V_2,E_2,L_2)$ is equivalent to finding a path of maximum length of the DAG $G_1 \otimes G_2$, which is also solvable in time and space linear in the size of $G_1 \otimes G_2$ (\Cref{sec:labeleddirectproduct}). Thus, our \LCSP\ algorithm using $G_1 \otimes G_2$ is not only a conceptually simpler version of the $O(\lvert E_1 \rvert \cdot \lvert E_2 \rvert)$-time and $\Theta(\lvert V_1 \rvert \cdot \lvert V_2 \rvert)$-space dynamic programming algorithm of Shimohira et al.~\cite{DBLP:conf/stringology/ShimohiraIBT11} for \LCSP, but can also be faster and use less space, if $G_1 \otimes G_2$ has size $O( \lvert V_1 \rvert \cdot \lvert V_2 \rvert )$ or the alphabet $\Sigma$ has constant size (\Cref{remark:time}).
Otherwise, our algorithm implies a greater space usage, since it stores $G_1 \otimes G_2$: choosing not to store the edges of $G_1 \otimes G_2$ and instead computing them when needed results in a time and space complexity closer to that of the existing algorithm for \LCSP\ (\Cref{remark:space}).

\subsection{Simple solution to an open problem} In addition to providing simple algorithms on DAGs, the labeled product graph also allows for conceptually simple and efficient solutions on arbitrary graphs. For example, \LCSP\ on two graphs containing cycles was left open by Shimohira et al.~\cite{DBLP:conf/stringology/ShimohiraIBT11}, and in \Cref{sec:hardness} we show that it is solvable by just checking whether $G_1 \otimes G_2$ has a cycle, and if not, still finding a path of maximum length in $G_1 \otimes G_2$ (see \Cref{tab:complexities2} for a summary of the complexity results for \LCSP).
\begin{table}[tbp]
\begin{center}
\newcommand{\lcsppathpath}[0]{$\substack{%
	\\[.5mm]%
	\displaystyle O( \lvert V_1 \rvert + \lvert V_2 \rvert ) \\%
	\text{w/ suffix tree} \;
	\text{\cite{Gus97,DBLP:conf/focs/Farach97}}%
	\\[.5mm]%
}$}%
\newcommand{\lcsppathtree}[0]{$\substack{%
	\\[.5mm]%
	\displaystyle O( \lvert V_1 \rvert + \lvert V_2 \rvert ) \\%
	\text{w/ suffix tree~\cite{DBLP:conf/cpm/Akutsu93}, or} \\%
	\text{w/ XBWT~\cite{DBLP:conf/focs/FerraginaLMM05}}
	\\[.5mm]%
}$}%
\newcommand{\lcsptreetree}[0]{$\substack{%
	\\[.5mm]%
	\displaystyle O( \lvert V_1 \rvert + \lvert V_2 \rvert ) \\%
	\text{w/ suffix tree} \\ \text{of a tree~\cite{DBLP:journals/tcs/Breslauer98,DBLP:journals/ieiceta/Shibuya03}, or} \\%
	\text{w/ XBWT~\cite{DBLP:conf/focs/FerraginaLMM05}}%
	\\[.5mm]%
}$}%
\newcommand{\lcspdag}[0]{$\substack{%
	\\[.5mm]%
	\displaystyle O( \lvert E_1 \rvert \cdot \lvert E_2 \rvert ) \\%
	\text{w/ dynamic} \\ \text{programming} \\ \text{algorithm \cite{DBLP:conf/stringology/ShimohiraIBT11}, or} \\%
	\text{w/ labeled direct} \\ \text{product graph,} \;
	\text{\Cref{sec:labeleddirectproduct}}%
	\\[.5mm]%
}$}%
\newcommand{\lcspgraph}[0]{$\substack{%
	\\[.5mm]%
	\displaystyle O( \lvert E_1 \rvert \cdot \lvert E_2 \rvert ) \\%
	\text{w/ labeled direct} \\ \text{product graph} \\%
	\text{\Cref{sec:generalgraphs}}%
	\\[.5mm]%
}$}%
\newcommand{\lcspgraphpaths}[0]{$\substack{%
	\\[.5mm]%
	\displaystyle \text{NP-complete} \\%
	\text{\cite{DBLP:conf/icalp/EquiGMT19}, \Cref{sec:hardness}}%
	\\[.5mm]%
}$}%
	\caption{Summary, for some variants of \LCSP\ defined on walk occurrences on two graphs $G_1 = (V_1,E_1,L_1)$ and $G_2 = (V_2,E_2,L_2)$, of the time complexities. The linear-time algorithms assume an integer alphabet and the quadratic-time algorithms are optimal under \OVH\ (\Cref{thm:LRSPcomplexity}).}\label{tab:complexities2}%
\begin{tabular}{r|cccc}
	\toprule
	$G_1 \backslash G_2$ & path & tree & DAG & graph \\\midrule
	path & \lcsppathpath & \lcsppathtree & \multicolumn{2}{c}{\multirow{4}*{\lcspdag}} \\\cmidrule{1-3}
	tree & -- & \lcsptreetree \\\cmidrule{1-3}
	DAG & -- & -- \\\cmidrule{1-5}
	graph & -- & -- & -- & \lcspgraph \\
	\bottomrule
\end{tabular}
\end{center}
\end{table}

\subsection{Solutions to new problems} The labeled direct product also allows for solutions to related problems. For \MSP\ on DAGs we analogously find paths of maximum length from some vertices of $G_1 \otimes G_2$ (\Cref{thm:alg-dags}). We generalize this algorithm on arbitrary graphs by computing the strongly connected components (SCCs) of $G_1 \otimes G_2$ and by checking a condition analogous to that of \LCSP\ for every vertex $v$ of $G_1$ (\Cref{thm:main-algorithm2}). These algorithms use time and space linear in the size of $G_1 \otimes G_2$.

\LRSP\ on a DAG $G$ is equivalent to finding paths of maximum length passing through specific vertices of $G \otimes G$~(\Cref{thm:alg-dags}). On arbitrary graphs, we use further interesting connections between purely graph-theoretic concepts of the labeled product graph (SCCs) and string-theoretic ones (non-deterministic vertices). The difference of \LRSP\ with respect to \LCSP\ and \MSP\ is that the problem may admit repeated strings of infinite length or repeated strings of unbounded lengths---these two scenarios may not coincide. Even if the difference between these two concepts may seem artificial, their study is necessary for the natural characterization of \LRSP\ solutions. Indeed, in \Cref{sec:generalgraphs} we show that such cases can be efficiently identified (infinite repeated strings can be identified in $G \otimes G$ by checking reachability from a certain set of vertices to a non-trivial SCC, while repeated strings of unbounded length can be identified by checking reachability from a non-trivial SCC to some non-deterministic vertex). If none of these cases happen, we show that the problem is solvable with the DAG algorithm for \LRSP\ on an acyclic subgraph of $G \otimes G$~(\Cref{thm:main-algorithm}). The entire procedures take time and space linear in the size of $G \otimes G$. In addition, we can also output a \emph{linear-size} representation of a longest repeated string, infinite or not.

\subsection{Optimality under conditional lower bounds} In \Cref{sec:hardness} we show that the above algorithms of worst-case quadratic-time complexity are also conditionally optimal. First, we note how the quadratic lower bounds of \cite{DBLP:conf/icalp/EquiGMT19,DBLP:conf/sosa/GibneyHT21} imply the same quadratic lower bounds for \LCSP\ and \MSP.
Second, in \Cref{thm:SMLG-reduction} of \Cref{sec:hardness}, we show that on DAGs that are \emph{deterministic} (i.e.~the labels of the out-neighbors of every vertex are all distinct) the \SMLG\ problem has a linear-time reduction to \LRSP, which thus implies the same lower bounds for \LRSP\ as in~\cite{DBLP:conf/icalp/EquiGMT19,DBLP:conf/sosa/GibneyHT21} (holding also for deterministic DAGs). To the best of our knowledge such a reduction does not exist when the problems are defined on strings. Third, in \Cref{thm:LRSPcomplexity} we show that, under the Orthogonal Vectors Hypothesis (\OVH)~\cite{Williams05}, there can be no truly sub-quadratic algorithm solving \LRSP, even when the graph is a DAG, with vertex labels from a binary alphabet, maximum in-degree and out-degree of any vertex at most 2, and is deterministic. Our reduction for \LRSP\ is simpler than that of~\cite{DBLP:conf/icalp/EquiGMT19}, but with an interesting difficulty arising from the fact that we must encode the orthogonal vectors input in the \emph{same} graph, and must ensure that the occurrences of the longest repeated string are distinct.

Moreover, in the same way as the labeled direct product graph is a general tool for obtaining algorithms, the construction behind our reduction could also be a general approach to obtain conditional lower bounds for string problems on graphs. For example, our \OVH\ construction (simpler than \cite{DBLP:conf/icalp/EquiGMT19}) also provides a conditional lower bound for \LCSP, and in \Cref{cor:MSP*} we show our reduction also proves the same conditional lower bound for a variant of \MSP.

\subsection{The full complexity picture of \LRSP} Finally, since on directed graphs \LRSP\ turned out the most complex problem to solve, in \Cref{sec:undirected} we complete its complexity picture by studying it also on undirected graphs, by similarly considering undirected paths, trees and graphs, and the path and walk variants of the problem (see \Cref{tab:complexities}). While these results are simpler than for directed graphs, they exhibit some interesting complexity dichotomies on analogous classes of graphs. For example, for walk occurrences the problem is \emph{linear-time} solvable on general undirected graphs, as opposed to having a conditional quadratic-time lower bound on general \emph{directed} graphs). Note that the \SMLG\ problem has the same complexity on both directed and undirected graphs~\cite{DBLP:conf/icalp/EquiGMT19}, making this dichotomy for \LRSP\ more interesting. Moreover, when defined on paths, we obtain only a quadratic-time algorithm on undirected trees, even though \LRSP\ is linear on \emph{directed} trees. As such, we put forward as an interesting open problem either improving this complexity, or proving a lower bound.
\begin{table}[tbp]
	\caption{Summary, for all the variants of \LRSP\ on a graph $(V,E,L)$, of the time complexities. The linear-time algorithms assume an integer alphabet and the quadratic-time algorithms for directed graphs are optimal under \OVH\ (\Cref{thm:LRSPcomplexity}). We leave as an open problem improving our solution to \LRSP\ on undirected trees, when the string occurrences are paths, or proving it is conditionally tight.}\label{tab:complexities}
	\centering
\newcommand{\directedlineargraphs}[0]{$\substack{%
	\\[.5mm]%
	\displaystyle O(\lvert V \rvert) \\%
	\text{w/ suffix tree} \;
	\text{\cite{Gus97,DBLP:conf/focs/Farach97}}%
	\\[.5mm]%
}$}%
\newcommand{\directedtrees}[0]{$\substack{%
	\\[.5mm]%
	\displaystyle O(\lvert V \rvert) \\%
	\text{w/ suffix tree of a tree~\cite{DBLP:journals/tcs/Breslauer98,DBLP:journals/ieiceta/Shibuya03}, or} \\
	\text{w/ XBW transform of a tree~\cite{DBLP:conf/focs/FerraginaLMM05}} \;
	\\[.5mm]%
}$}%
\newcommand{\directeddags}[0]{$\substack{%
	\\[.5mm]%
	\displaystyle O \big( \lvert E \rvert^2 \big) \\%
	\text{w/ labeled direct} \\ \text{\vphantom{/}self-product graph} \\%
	\text{\vphantom{/}\Cref{sec:alg-dags}}%
	\\[.5mm]%
}$}%
\newcommand{\directedonwalks}[0]{$\substack{%
	\\[.5mm]%
	\displaystyle O \big( \lvert E \rvert^2 \big) \\%
	\text{w/ labeled} \\ \text{direct}\\
	\text{self-product} \\ \text{graph\vphantom{/}} \\%
	\text{\Cref{sec:generalgraphs}}%
	\\[.5mm]
}$}%
\newcommand{\directedonpaths}[0]{$\substack{%
	\\[.5mm]%
	\displaystyle \text{NP-complete} \\%
	\text{\Cref{thm:np-path}}%
	\\[.5mm]
}$}%
\newcommand{\undirectedlineargraphsonpaths}[0]{$\substack{%
	\\[.5mm]%
	\displaystyle O(\lvert V \rvert) \\%
	\text{w/ suffix tree} \\%
	\text{\Cref{sec:undirected}}%
	\\[.5mm]
}$}%
\newcommand{\undirectedtreesonwalks}[0]{$\substack{%
	\\[.5mm]%
	\displaystyle O(\lvert V \rvert) \\%
	\text{w/ repeated strings} \\%
	\text{of length 2 check\vphantom{/}} \\%
	\text{\Cref{sec:undirected}}%
	\\[.5mm]
}$}%
\newcommand{\undirectedonwalks}[0]{$\substack{%
	\\[.5mm]%
	\displaystyle O(\lvert E \rvert) \\%
	\text{w/ repeated strings} \\%
	\text{of length 2 check\vphantom{/}} \\%
	\text{\Cref{sec:undirected}}%
	\\[.5mm]%
}$}%
\newcommand{\undirectedtreesonpaths}[0]{$\substack{%
	\\[.5mm]%
	\displaystyle O \big( \lvert V \rvert^2 \big) \\%
	\text{w/ reduction to \LRSP} \\%
	\text{on directed trees}\vphantom{/} \\%
	\text{\Cref{sec:undirected}}
	\\[.5mm]%
}$}%
\newcommand{\undirectedonpaths}[0]{$\substack{%
	\displaystyle \text{NP-complete} \\%
	\text{\Cref{sec:undirected}}
}$}%
	\begin{tabular}{c|c c c c}
		\toprule
		\multirow{2}{*}{Graph Class} & \multicolumn{4}{c}{Graph Type} \\
		& \multicolumn{2}{c}{directed} & \multicolumn{2}{c}{undirected} \\
		\midrule
		paths & \multicolumn{2}{c|}{\directedlineargraphs} &                        & \undirectedlineargraphsonpaths \\ \cmidrule{1-3}\cmidrule{5-5}
		trees         & \multicolumn{2}{c|}{\directedtrees}        & \multirow{-4}{*}{\undirectedtreesonwalks} & \undirectedtreesonpaths \\ \cmidrule{1-5}
		DAGs          & \multicolumn{2}{c|}{\directeddags} & -- & -- \\ \cmidrule{1-5}
		graphs        & \directedonwalks & \multicolumn{1}{c|}{\directedonpaths} & \undirectedonwalks & \undirectedonpaths \\
		\midrule
		& on walks & \multicolumn{1}{c}{on paths} & on walks & on paths \\
		& \multicolumn{4}{c}{Occurrence definition} \\
		\bottomrule
	\end{tabular}
\end{table}

\subsection{Notation and preliminaries}

Given a non-empty and finite alphabet $\Sigma$, we denote with $\Sigma^*$ and $\Sigma^\omega$ the set of all finite and infinite strings over $\Sigma$, respectively. For convenience, we also define $\Sigma^+ \coloneqq \Sigma \setminus \lbrace \varepsilon \rbrace$, with $\varepsilon$ the empty string. We say that $\Sigma$ is an \emph{integer alphabet} if it contains integers from a range that is linear-sized with respect to the input of the problem at hand, allowing linear-time lexicographical sorting. Given the $\Sigma$-labeled graph $G = (V,E,L)$, a \emph{walk} in $G$ is any finite or infinite sequence of vertices $p = (p_0, p_1, p_2, \dots)$, such that there is an edge from any $p_i$ to its successor in $p$. If all vertices of $p$ are pairwise distinct, then $p$ is called a \emph{path}. The \emph{length} of a finite walk $p$ is its number of edges. Just as strings can be concatenated to form longer strings, walks can be concatenated to form longer walks under the condition that the result is still a walk in $G$. Two walks $p = (p_0,p_1,p_2,\dots)$, $q = (q_0,q_1,q_2,\dots)$ in $G$ are \emph{distinct}, in symbols $p \neq q$, if there is an index $i$ such that $p_i \neq q_i$.  A finite (resp.\ infinite) \emph{string occurring in} $G$ (or simply, a string of $G$) is any string $S \in \Sigma^*$ (resp.\ $S \in \Sigma^\omega$) such that there is a finite (resp.\ infinite) walk $p = (p_0, p_1, p_2, \dots)$ in $G$ with $S = L(p) \coloneqq L(p_0)L(p_1)L(p_2)\dots$. We say that $p$ is an \emph{occurrence} of $S$ in $G$, that $p$ \emph{spells} $S$ in $G$ or that $S$ has a \emph{match} in $G$. A string $S$ occurring in $G$ is \emph{repeated} if there are at least two distinct occurrences of $S$ in $G$, in symbols $\exists p,q \; \text{walks in $G$}$ such that $p \neq q \,\wedge\, L(p) = L(q) = S$. Throughout the paper, we will assume that every vertex has at least one in-neighbor or out-neighbor, so it holds that $\lvert V \rvert \le 2 \lvert E \rvert$, $\lvert V \rvert \in O( \lvert E \rvert )$ and we can simplify a complexity bound such as $O(\lvert V \rvert + \lvert E \rvert)$ into $O(\lvert E \rvert)$.

\begin{remark}\label{remark}
In solving \LCSP, \MSP\ and \LRSP\ we can assume that for any input labeled graph $G = (V,E,L)$ it holds that $\lvert V \lvert \in O( \lvert E \lvert)$, because:
\begin{itemize}
	\item the problems become trivial when considering only paths of length $0$, in the sense that there is a common or repeated string of length one if and only if there are different vertices labeled with the same character in the respective graphs; if in $G$ there are vertices without both incoming and outgoing edges, they can be treated separately since the strings they generate have all length $1$, thus we will assume throughout the rest of the paper that every vertex $v \in V$ has at least one incoming or outgoing edge, meaning that
	$\lvert E \rvert \ge \lvert V \rvert / {2}$.

	\item the answer to \LCSP\ and \LRSP\ is the empty string $\varepsilon$ if and only if the sets of labels used in $G_1$ and $G_2$ do not intersect, or if each vertex of $G$ is labeled with a different character (implying $\lvert \Sigma \rvert \ge \lvert V \rvert$); this can be easily checked assuming we are working with an integer alphabet, if not it is still $O( \lvert V \rvert \log \lvert V \rvert)$, so we will assume that there is a common or repeated string of length 1, unless stated otherwise.
\end{itemize}
\end{remark}

\section{The Labeled Direct Product}
\label{sec:labeleddirectproduct}

	Recall that the direct product of two graphs is the graph whose vertex set is the Cartesian product of the vertex sets of the initial graphs where we have an edge between two vertices if there are corresponding edges in the initial graphs between vertices on the first component and between vertices on the second component. 
	This product has been studied in the literature in both the undirected and directed setting, under the names \emph{conjunction}, \emph{tensor product}, \emph{Kronecker product}, and others (see \cite[p.~21]{DBLP:books/daglib/0070576} and \cite{harary1967boolean}). We will use instead the \emph{labeled} direct product of $G_1$ and $G_2$, obtained as the subgraph of the direct product of $G_1$ and $G_2$ induced by the vertices for which their two components have the same label.
	Although this notion is similar to the automaton recognizing the intersection of two automata (see \cite{DBLP:journals/ibmrd/RabinS59}), the key difference is that the labeled direct product graph does not contain any pair of edges/transitions with mismatching labels.

\subsection{Definition and basic properties}
 Consider the following definition, and see also \Cref{fig:labeled-product,fig:labeled-self-product}.

\begin{definition}[Labeled direct product graph]
    \label{def:labeled-direct-self-product}
	Given two $\Sigma$-labeled graphs $G_1 = (V_1,E_1,L_1)$ and $G_2 = (V_2,E_2,L_2)$, we define \emph{the labeled direct product graph $G_1 \otimes G_2 = (V', E', L')$}, where:
	\begin{gather*}
		V' = \big\lbrace (u,v) \in V_1 \times V_2 : L_1(u) = L_2(v) \big\rbrace,\\
		E' = \Big\lbrace
			\big( (u,v), (u',v') \big)  \in V' \times V' \,:\,
			(u,u') \in E_1 \wedge (v,v') \in E_2
		\Big\rbrace,
	\end{gather*}
	 and $L'$ is defined so that $L'(u,v) = L_1(u) = L_2(v)$ for each $(u,v) \in V'$. 
\end{definition}

\begin{figure}[tbp]
	\centering
    \newbox\mybox
    \setbox\mybox=\hbox{\includegraphics[scale=1]{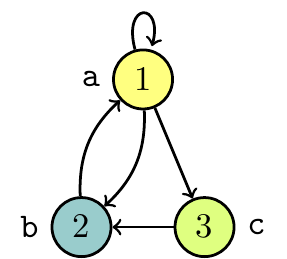}}
	\begin{minipage}{\wd\mybox}\includegraphics[scale=1]{Figures/g1example}\end{minipage}
	\quad
	\setbox\mybox=\hbox{\includegraphics[scale=1]{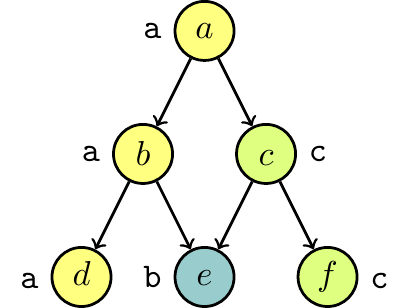}}
	\begin{minipage}{\wd\mybox}\includegraphics[scale=1]{Figures/g2example}\end{minipage}
	\quad
	\setbox\mybox=\hbox{\includegraphics[scale=1]{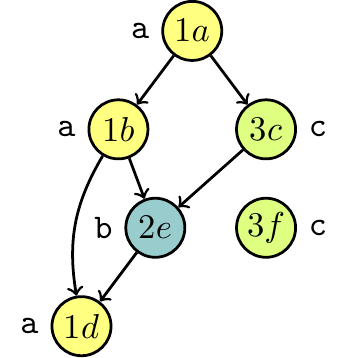}}
	\begin{minipage}{\wd\mybox}\includegraphics[scale=1]{Figures/g1g2product}\end{minipage}
	\caption{An example of two $\lbrace \mathtt{a},\mathtt{b},\mathtt{c} \rbrace$-labeled graphs $G_1$, $G_2$ and their labeled direct product graph $G_1 \otimes G_2$ on the right. Since $G_2$ is a DAG, $G_1 \otimes G_2$ is a DAG as well.}\label{fig:labeled-product}
\end{figure}
\begin{figure}[tbp]
	\centering
    \setbox\mybox=\hbox{\includegraphics[scale=1]{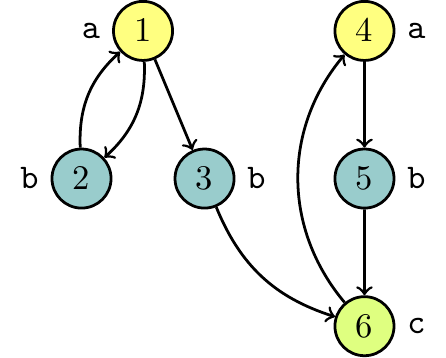}}
	\begin{minipage}{\wd\mybox}\includegraphics[scale=1]{Figures/graphnonproductexample}\end{minipage}
	\qquad
	\setbox\mybox=\hbox{\includegraphics[scale=1]{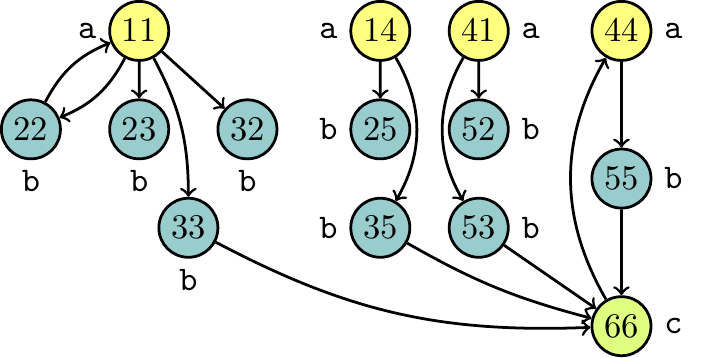}}
	\begin{minipage}{\wd\mybox}\includegraphics[scale=1]{Figures/graphproductexample}\end{minipage}
	\caption{A $\lbrace \mathtt{a}, \mathtt{b}, \mathtt{c} \rbrace$-labeled directed graph $G$ (left) and its labeled direct self-product $G \otimes G$ (right). Since $G$ is cyclic, $G \otimes G$ is cyclic as well.}\label{fig:labeled-self-product}
\end{figure}

Given a vertex $q = (u, v) \in V'$, let $\pi_1 (q) \coloneqq u$ and $\pi_2 (q) \coloneqq v$. Given a walk $q = \big((p_0, p_0'), (p_1, p_1'), \dots \big)$ in $G_1 \otimes G_2$, we denote with $\pi_1(q)$ and $\pi_2(q)$ the walks $(p_0, p_1, \dots)$ and $(p_0', p_1', \dots)$ in $G_1$ and $G_2$, respectively. We state the following basic fact about the correspondence between the pairs of walks in $G_1$ and $G_2$ and the walks in $G_1 \otimes G_2$. In particular, this implies that the projections of any cycle in $G_1 \otimes G_2$ are two cycles in $G_1$ and $G_2$ reading the same string and vice versa.

\begin{remark}\label{lem:graphproductproperty}
	Given $G_1$, $G_2$ $\Sigma$-labeled graphs, for each pair $(p_0, p_1, \dots)$, $(p_0', p_1', \dots)$ of finite (resp.\ infinite) walks in $G_1$ and $G_2$, respectively, reading the same finite (resp.\ infinite) string $S \in \Sigma^*$ (resp.\ $S \in \Sigma^\omega$), $p \otimes p' \coloneqq \big( (p_0, p_0'), (p_1, p_1'), \dots \big)$ is a finite (resp.\ infinite) walk in $G_1 \otimes G_2$ reading $S$ and vice versa.
\end{remark}

Since all the algorithms we develop consist in analyzing the labeled direct product of the input graphs, we must take great care in the time and space spent on its construction. Moreover, in \Cref{remark:space} we show that its size can be computed efficiently, making the run time of our algorithms predictable.
\begin{remark}\label{remark:time}
The construction of $G_1 \otimes G_2 = (V',E',L')$ takes $O(\lvert V_1 \rvert \cdot \lvert V_2 \rvert + \lvert E_1 \rvert \cdot \lvert E_2 \rvert)$ time and space, because each pair of vertices and each pair of edges need to be considered at most once. Assuming $\Sigma$ to be an integer or a constant-size alphabet we can do better than the naive construction algorithm with respect to time or space:
\begin{itemize}
    \item if $\Sigma$ is an integer alphabet, by first sorting lexicographically the lists of edges of $G_1$ and $G_2$, the product $G_1 \otimes G_2$ can then be built in linear-time with respect to its size, by simply pairing all edges of $G_1$ and $G_2$ with matching labels;
    \item if $\Sigma$ has constant size, there is no need to store the edges of $G_1 \otimes G_2$, since for all $a \in \Sigma$ we can report in time linear in the solution all $a$-labeled out-neighbors of any vertex $(u,v)$ by pairing all $a$-labeled out-neighbors of $u$ and $v$ in $G_1$ and $G_2$, respectively;
    \item if $\Sigma$ is an integer alphabet, preprocessing $G_1$, $G_2$ in order to report the (number of) out-neighbors of any vertex $(u,v)$ of $G_1 \otimes G_2$ is equivalent to the \textsf{SetIntersection} problem, for which Goldstein et al.\ proved conditional lower bounds on the trade-off between the space and time used in its solution~\cite{DBLP:conf/wads/GoldsteinKLP17}; if we choose not to store at all the edges of $G_1 \otimes G_2$, the algorithms exploiting $G_1 \otimes G_2$ will then take $\Theta(\lvert V' \rvert)$ space and $O(\lvert V' \rvert + \lvert E_1 \rvert \cdot \lvert E_2 \rvert)$ time.\footnote{We speculate that a careful implementation of our algorithms using bitvectors might take $\Theta( \lvert V' \rvert )$ space and $O\big(\lvert V' \rvert + \lvert E' \rvert + \lvert V' \rvert \cdot \lvert \Sigma \rvert / \log (\lvert \Sigma \rvert)\big)$ time. The conditional lower bounds by Goldstein et al.\ ignore logarithmic factors, so this would not be a contradiction.}
\end{itemize}
\end{remark}

\begin{remark}
Since vertices and edges in $G_1 \otimes G_2$ correspond to vertices and edges in $G_1$ and $G_2$ \emph{with matching labels}, the size of $G_1 \otimes G_2$ could be much less than $\lvert V_1 \rvert \cdot \lvert V_2 \rvert + \lvert E_1 \rvert \cdot \lvert E_2 \rvert$ in practice, or for some families of labeled graphs. In particular, if each $a \in \Sigma$ is the label of at most $O(1)$ pairs of vertices in $V_1 \times V_2$ then $G_1 \otimes G_2$ has size $O( \lvert V_1 \rvert + \lvert V_2 \rvert + \lvert E_1 \rvert + \lvert E_2 \rvert )$: this is not in contradiction with the conditional lower bounds of \Cref{sec:hardness} because the graph obtained in the reduction of \Cref{thm:LRSPcomplexity} uses only two labels, $\Theta(\lvert V \rvert)$ times each.
\end{remark}
\begin{remark}\label{remark:space}
If $\Sigma$ is an integer alphabet, the size of $G_1 \otimes G_2$ can be computed in time linear in the size of the input graphs $G_1$ and $G_2$. Indeed, let $V_1^a$, $V_2^a$ be the sets of $a$-labeled vertices of $G_1$, $G_2$, respectively, and let $E_i^{a,b}$ be the set edges of $G_i$ connecting an $a$-labeled vertex to a $b$-labeled vertex, with $i = 1,2$. Then it is easy to see that
\begin{gather*}
    \lvert V' \rvert = \prod_{a \in \Sigma} \lvert V_1^a \rvert \cdot \lvert V_2^a \rvert
    \quad\text{and}\quad
    \lvert E' \rvert = \prod_{a,b \in \Sigma} \lvert E_1^{a,b} \rvert \cdot \lvert E_2^{a,b} \rvert
\end{gather*}
and that $\lvert V' \rvert + \lvert E' \rvert$ can be easily computed after sorting the vertex and edge sets of $G_1$ and $G_2$. Note that if $\Sigma$ has constant size, the size of $G_1 \otimes G_2$ can be found in constant time after the independent sorting of $G_1$ and $G_2$.
\end{remark}

\subsection{Optimal algorithms for DAGs}
\label{sec:alg-dags}

We first consider the case when the direct product graph is a DAG. Note that $G \otimes G$ is DAG if and only if $G$ itself is, $G_1 \otimes G_2$ is a DAG if at least one between $G_1$ and $G_2$ is a DAG, but $G_1 \otimes G_2$ might be a DAG even if both $G_1$ and $G_2$ contain cycles.

Thanks to \Cref{lem:graphproductproperty}, \SMLG, \LCSP, \MSP\ and \LRSP\ can be solved by finding paths of maximum length in the corresponding direct product graph: an occurrence of pattern $S$ in graph $G$ corresponds to a path of length $\lvert S \rvert - 1$ in $G \otimes G_S$, where $G_S$ is a labeled path of $\lvert S \rvert$ vertices spelling $S$; a longest common string of $G_1$ and $G_2$ is spelled by a path of maximum length in $G_1 \otimes G_2$; the matching statistics $\MS(v)$ of $G_1$ and $G_2$, with $v \in V_1$, is equal to one plus the length of a path of maximum length in $G_1 \otimes G_2$ starting from any vertex in $\lbrace v \rbrace \times V_2$; a longest repeated string of a DAG $G$ is spelled by a path of maximum length of $G \otimes G$ visiting at least one vertex $(u,v)$ such that $u \neq v$.

Indeed, for every vertex $(u,v)$ of the product graph $(V',E',L')$ we can compute by dynamic programming the length $\ell^+(u,v)$ of the longest path starting at $(u,v)$:
\begin{itemize}
    \item $\SMLG$ is solved by finding a path of length $\lvert S \rvert$ starting from a vertex $(u,v)$ such that $\ell^+(u,v) = \lvert S \rvert$;
    \item $\LCSP$ is solved by finding a vertex $(u,v)$ of $G_1 \otimes G_2$ such that $\ell^+(u,v)$ has maximum value and by retrieving the string corresponding to a path of length $\ell^+(u,v)$ starting at $(u,v)$;
    \item $\MSP$ is solved by finding for each $v \in V_1$ the maximum value of $\ell^+(v,w) + 1$, with $(v,w)$ a vertex of $G_1 \otimes G_2$, and this can be done by iterating once over all vertices in $V'$.
\end{itemize}
We can analogously compute for each $(u,v) \in V'$ the length $\ell^-(u,v)$ of the longest path in $(V',E',L')$ ending at $(u,v)$:
\begin{itemize}
    \item $\LRSP$ is solved by iterating over all vertices $(u,v)$ of $G \otimes G$ such that $u \neq v$, and obtaining the length of the longest repeated string in $G$ whose occurrences pass through the distinct vertices $u$ and $v$, as $\ell^+(u,v) + \ell^-(u,v) +1$ (a longest repeated string of this length can then be retrieved).
\end{itemize}

\begin{theorem}
\label{thm:alg-dags}
    Given $G_1 = (V_1,E_1,L_1)$, $G_2 = (V_2,E_2,L_2)$ $\Sigma$-labeled directed graphs, \LCSP\ and \MSP\ on $G_1$, $G_2$ are solvable in $O(\lvert E_1 \rvert \cdot \lvert E_2 \rvert)$ time and taking $O(\lvert V_1 \rvert \cdot \lvert V_2 \rvert)$ words in space. Analogously, $\LRSP$ on a $\Sigma$-labeled graph $G = (V,E,L)$ is solvable in $O(\lvert E \rvert^2)$ time and taking $O(\lvert V \rvert^2)$ words in space. For all three problems plus \SMLG, if the product graph is given, then the solution takes linear time in the size of the product graph.
\end{theorem}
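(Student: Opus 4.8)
The plan is to treat all four problems uniformly through \Cref{lem:graphproductproperty}: walks of $G_1 \otimes G_2$ are in label-preserving bijection with pairs of equally-spelling walks of $G_1$ and $G_2$, so each problem turns into a longest-path question on the product graph, which is acyclic under the hypotheses of this section. First I would verify the four reductions sketched above. A common string of length $k$ of $G_1$ and $G_2$ corresponds to a walk of length $k-1$ in $G_1 \otimes G_2$, and since a DAG admits no repeated vertex along any walk, every such walk is a path; hence a longest common string is spelled by a globally longest path, of length $\max_{(u,v)} \ell^+(u,v)$, which settles \LCSP. For \MSP, a walk of $G_1$ from $v$ whose spelling occurs in $G_2$ lifts to a walk of the product starting at some $(v,w) \in V'$ and conversely, so $\MS(v) = 1 + \max\{ \ell^+(v,w) : (v,w) \in V' \}$, obtained by one scan over $V'$ grouped by first coordinate. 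For \SMLG\ with pattern $S$, an occurrence is exactly a walk of $G \otimes G_S$ whose second projection traverses the whole path $G_S$; as $G_S$ is a simple path on $\lvert S \rvert$ vertices, this is precisely a path of length $\lvert S \rvert - 1$, so it suffices to test whether some $\ell^+(u,v)$ attains $\lvert S \rvert - 1$ and read off the path.

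The delicate reduction is \LRSP, whose correctness I would isolate as a short claim. Two occurrences $p \neq q$ of a string $S$ in $G$ correspond, via \Cref{lem:graphproductproperty}, to a single walk $p \otimes q$ of $G \otimes G$ spelling $S$, and by the definition of distinct walks $p \neq q$ holds \emph{if and only if} $p \otimes q$ visits at least one off-diagonal vertex $(u,v)$ with $u \neq v$; conversely every walk of $G \otimes G$ through such a vertex yields two distinct occurrences of its spelling. Hence a longest repeated string is spelled by a longest walk of $G \otimes G$ constrained to pass through some off-diagonal vertex, and in the acyclic case every such walk is a path, so for a fixed off-diagonal $(u,v)$ the longest path through it has $\ell^-(u,v) + \ell^+(u,v)$ edges and spells a string of length $\ell^-(u,v) + \ell^+(u,v) + 1$; maximizing over all off-diagonal $(u,v)$ gives the answer. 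Acyclicity is what guarantees these maxima are finite and attained, so no infinite or unbounded-length repeats arise here, unlike in the cyclic case treated later.

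Given the product graph, I would compute $\ell^+$ and $\ell^-$ by the standard single-pass longest-path dynamic program over a topological order of the DAG (reverse order for $\ell^+$, with $\ell^+(u,v) = 1 + \max$ over out-neighbors and $0$ at sinks, and forward order for $\ell^-$), storing a backpointer per vertex so that an optimal path and its spelled string can be retrieved by one traversal. Every step here touches each vertex and edge a constant number of times, so the whole computation is $O(\lvert V' \rvert + \lvert E' \rvert)$, which proves the final sentence of the statement for all three problems and for \SMLG. For the global bounds I would invoke \Cref{remark:time}: building $G_1 \otimes G_2$ takes $O(\lvert V_1 \rvert \cdot \lvert V_2 \rvert + \lvert E_1 \rvert \cdot \lvert E_2 \rvert) = O(\lvert E_1 \rvert \cdot \lvert E_2 \rvert)$ time using $\lvert V_i \rvert \in O(\lvert E_i \rvert)$, and analogously $O(\lvert E \rvert^2)$ for $G \otimes G$.

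The main technical subtlety beyond this assembly is the space bound of $O(\lvert V_1 \rvert \cdot \lvert V_2 \rvert)$ words, which rules out storing the potentially $\Theta(\lvert E_1 \rvert \cdot \lvert E_2 \rvert)$ edges of the product explicitly. I would resolve it exactly as in the last bullet of \Cref{remark:time}: keep only the $\Theta(\lvert V' \rvert)$ vertices together with the $\ell^{\pm}$ and backpointer arrays, and feed each pass (topological sorting included) by enumerating the out- or in-neighbors of each product vertex on the fly from $G_1$ and $G_2$ rather than from a stored edge list. Each pass then runs in $O(\lvert V' \rvert + \lvert E_1 \rvert \cdot \lvert E_2 \rvert)$ time and $O(\lvert V' \rvert) = O(\lvert V_1 \rvert \cdot \lvert V_2 \rvert)$ space; combined with $\lvert V' \rvert \le \lvert V_1 \rvert \cdot \lvert V_2 \rvert \le 4 \lvert E_1 \rvert \cdot \lvert E_2 \rvert$, this yields the claimed $O(\lvert E_1 \rvert \cdot \lvert E_2 \rvert)$-time and $O(\lvert V_1 \rvert \cdot \lvert V_2 \rvert)$-space bounds for \LCSP\ and \MSP, and the analogous $O(\lvert E \rvert^2)$ and $O(\lvert V \rvert^2)$ bounds for \LRSP.
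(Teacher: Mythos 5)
Your proposal is correct and follows essentially the same route as the paper: it uses the walk-correspondence of \Cref{lem:graphproductproperty} to reduce all four problems to longest-path computations on the acyclic product graph, with exactly the paper's formulas ($\MS(v) = 1 + \max_w \ell^+(v,w)$, and $\ell^-(u,v)+\ell^+(u,v)+1$ maximized over off-diagonal vertices for \LRSP), and invokes \Cref{remark:time} for the construction and complexity bounds. If anything, you are slightly more explicit than the paper on two points it leaves to the surrounding remarks---the proof that distinct occurrences correspond exactly to walks through off-diagonal vertices, and the on-the-fly neighbor enumeration needed so that the space stays $O(\lvert V_1 \rvert \cdot \lvert V_2 \rvert)$ rather than $O(\lvert E_1 \rvert \cdot \lvert E_2 \rvert)$---which only strengthens the argument.
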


\section{Optimal Algorithms for General Graphs}
\label{sec:generalgraphs}

Since \LCSP, \MSP\ and \LRSP\ defined on paths are NP-complete (see \Cref{sec:hardness}), we focus here on the three problems defined on walks. If we deal with graphs containing cycles, then the length of the walks and strings to consider is not bounded anymore so we modify the three problems to require the detection of the relative cases. In fact, we will show that in all cases we can also report a \emph{linear-size} representation of the corresponding common or repeated strings. As we stated in the introduction, the three problems admit worst-case quadratic-time solutions based on the labeled direct product graph, i.e.\ $G_1 \otimes G_2$ for \LCSP\ and \MSP\ and $G \otimes G$ for \LRSP.
\begin{definition}
    Given a labeled direct product graph $G_1 \otimes G_2$ or $G \otimes G$, we define: \begin{itemize}
        \item $V'_{\text{\emph{cyc}}}$ as the set of all vertices of the product graph involved in a cycle, namely those belonging to an SCC consisting of at least two vertices;
    \end{itemize}
    also, for $G \otimes G = (V',E',L')$ we define:
    \begin{itemize}
        \item $V'_{\text{\emph{diff}}}$ as the set of vertices $(u,v) \in V'$ with $u \neq v$;
        \item $V'_{\text{\emph{ndet}}}$ as the set of all vertices $(v,v) \in V'$ with $v$ a \emph{non-deterministic} vertex of $G$, that is, $v$ has two out-neighbors labeled with the same character.
    \end{itemize}
\end{definition}

\begin{figure}[tbp]
	\centering
	\begin{minipage}{.45\textwidth}\centering\includegraphics[scale=1]{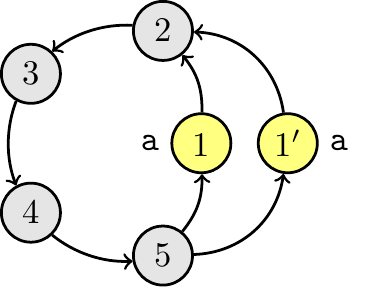}\end{minipage}%
	\begin{minipage}{.45\textwidth}\centering\includegraphics[scale=1]{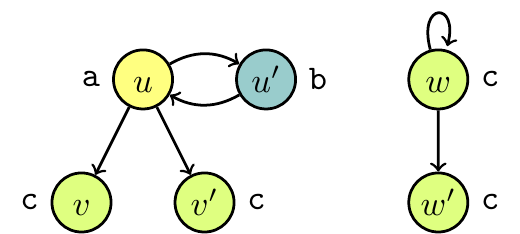}\end{minipage}
	\caption{An example (left) of a non-deterministic graph $G$ with two distinct cycles $(1,2,3,4,5)$ and $(1',2,3,4,5)$; their (uncountably many) infinite repetition generates the same infinite string; and a non-deterministic graph (right) with no infinite repeated strings but with  finite repeated strings of unbounded length, precisely of the form $(\mathtt{ba})^k\mathtt{c}$, $(\mathtt{ab})^k\mathtt{ac}$ and $\mathtt{c}^k\mathtt{c}$, for every $k \ge 0$.}\label{fig:unboundedrep}
\end{figure}

\subsection{\LCSP\ and \MSP\label{sec:LCSP-MSP}}
Since the graphs can contain cycles, the common strings in \LCSP\ and \MSP\ can now have infinite length. 
The algorithm solving \LCSP\ on any two $\Sigma$-labeled graphs $G_1$, $G_2$ consists of the following simple checks in $G_1 \otimes G_2$:
\begin{description}
	\item[\textbf{Infinite length common strings}] Check if $G_1 \otimes G_2$ contains a cycle; if so, return $(i)$ the string spelled by any cycle and $(ii)$ the symbol $\omega$; otherwise
	\item[\textbf{Finite length common strings}] Proceed as in the algorithm for the DAG case from \Cref{sec:alg-dags} on $G_1 \otimes G_2$.
\end{description}
The correctness of this algorithm follows from the fact that there is a common string of infinite length if and only if there is a common string of infinite length of the form $S^\omega$ (see also \Cref{lem:periodicstrings} below).

\MSP\ can be solved as well by studying the Strongly Connected Components (SCCs) of $G_1 \otimes G_2 = (V', E', L')$:
\begin{description}
	\item[\textbf{Infinite length matching statistics}] For all $(u,v) \in V'_{\text{cyc}}$ set $\MS(u) = \infty$.
	\item[\textbf{Finite length matching statistics}] Proceed for the remaining vertices of $V_1$, i.e.\ the vertices $u \in V_1$ such that no $(u,v) \in V'$ is also in $V'_{\text{cyc}}$, as in the algorithm for the DAG case from \Cref{sec:alg-dags}. Note that in this second step we consider an acyclic subgraph of $G_1 \otimes G_2$.
\end{description}

\begin{theorem}
\label{thm:main-algorithm2}
    The above algorithms correctly solve \LCSP\ and \MSP. Moreover, if $G_1 \otimes G_2$ is given, they can be implemented to run in time linear in the size of $G_1 \otimes G_2$. If not, they run in time $O( \lvert E_1 \rvert \cdot \lvert E_2 \rvert )$, where $E_1$ and $E_2$ are the edge sets of $G_1$ and $G_2$, respectively.
\end{theorem}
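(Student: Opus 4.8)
The plan is to reduce both correctness claims to the walk--walk correspondence of \Cref{lem:graphproductproperty} together with the DAG algorithms of \Cref{thm:alg-dags}, treating the finite and infinite regimes separately. By \Cref{lem:graphproductproperty}, a string $S$ is common to $G_1$ and $G_2$ if and only if $S$ is spelled by some walk of $G_1 \otimes G_2$; hence a longest common string is exactly the spelling of a longest walk of the product graph, be it finite or infinite, and the same correspondence turns every statement about matching walks of $G_1$ into a statement about walks of $G_1 \otimes G_2$.

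For \LCSP\ I would first establish the dichotomy underlying the two branches: $G_1 \otimes G_2$ contains a directed cycle if and only if $G_1$ and $G_2$ share a common string of infinite length. For one direction, any infinite common string corresponds to an infinite walk in the \emph{finite} graph $G_1 \otimes G_2$, which by pigeonhole must revisit a vertex and therefore contains a cycle; for the converse, a cycle spelling a finite string $S$ yields the infinite common string $S^\omega$. This simultaneously proves, inline, the periodicity fact used by the algorithm (an infinite common string exists iff one of the form $S^\omega$ does), so I do not need to invoke \Cref{lem:periodicstrings}. In the cyclic branch the reported pair $(S,\omega)$ is thus a correct linear-size representation of a longest---necessarily infinite---common string; in the acyclic branch no infinite common string exists, so a longest common string is finite and equals the spelling of a longest path of the DAG $G_1 \otimes G_2$, computed correctly by \Cref{thm:alg-dags}.

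For \MSP\ the crux is characterizing when $\MS(u) = \infty$. I would prove that $\MS(u) = \infty$ precisely when some vertex of $\lbrace u \rbrace \times V_2$ can \emph{reach} a vertex of $V'_{\text{cyc}}$ in $G_1 \otimes G_2$: if such a vertex reaches a non-trivial SCC, concatenating the connecting walk with infinitely many turns around a cycle gives, through $\pi_1$, an infinite walk from $u$ whose spelling occurs in $G_2$; conversely, if $\MS(u)$ is unbounded then there are arbitrarily long walks from the finite set $\lbrace u \rbrace \times V_2$ in the finite, finitely-branching product graph, so by K\"onig's lemma an infinite such walk exists, and it must enter, hence reach, a non-trivial SCC. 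For every remaining $u$---those for which no vertex of $\lbrace u \rbrace \times V_2$ reaches $V'_{\text{cyc}}$---the subgraph of $G_1 \otimes G_2$ reachable from $\lbrace u \rbrace \times V_2$ is acyclic, so $\MS(u)$ equals one plus the longest path length out of $\lbrace u \rbrace \times V_2$, obtained exactly as in \Cref{thm:alg-dags}. The main obstacle is precisely this $\infty$-characterization: it must be phrased in terms of \emph{reachability} to $V'_{\text{cyc}}$ rather than \emph{membership} in $V'_{\text{cyc}}$, since a vertex $(u,v)$ need not itself lie on a cycle even when an infinite matching walk starts at $u$, so the acyclic step must be restricted to vertices that cannot reach $V'_{\text{cyc}}$ at all.

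For the running time, suppose first that $G_1 \otimes G_2 = (V',E',L')$ is given. Computing the SCCs and hence $V'_{\text{cyc}}$ (Tarjan's algorithm), detecting a cycle, computing by a single search on the reverse graph the set of vertices reaching $V'_{\text{cyc}}$, and running the longest-path dynamic program of \Cref{thm:alg-dags} on the acyclic part each take $O(\lvert V' \rvert + \lvert E' \rvert)$ time and space, i.e.\ linear in the size of $G_1 \otimes G_2$. If the product is not given, I would build it in $O(\lvert E_1 \rvert \cdot \lvert E_2 \rvert)$ time using \Cref{remark:time}; since $\lvert V_i \rvert \in O(\lvert E_i \rvert)$ we have $\lvert V' \rvert + \lvert E' \rvert \in O(\lvert V_1 \rvert \cdot \lvert V_2 \rvert + \lvert E_1 \rvert \cdot \lvert E_2 \rvert) = O(\lvert E_1 \rvert \cdot \lvert E_2 \rvert)$, so the whole procedure runs in $O(\lvert E_1 \rvert \cdot \lvert E_2 \rvert)$ time, as claimed.
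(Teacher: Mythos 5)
Your proposal is correct, and it is worth noting that the paper states \Cref{thm:main-algorithm2} \emph{without} an explicit proof: correctness of the \LCSP\ branch is justified there by a single sentence (with a pointer, by analogy, to \Cref{lem:periodicstrings}, which is actually about repeated strings in $G \otimes G$), and the \MSP\ branch is not argued at all. Your write-up follows the same product-graph strategy---cycle detection for the infinite case, the \Cref{thm:alg-dags} dynamic program for the finite case---but fills in exactly the pieces the paper leaves implicit, and in one place it genuinely sharpens the paper's description. For \MSP, the paper's algorithm as literally written sets $\MS(u) = \infty$ only when some $(u,v)$ \emph{belongs} to $V'_{\text{cyc}}$ and then runs the DAG algorithm on the acyclic subgraph for all remaining $u$; taken at face value this undercounts $\MS(u)$ for a vertex $u$ such that some $(u,v)$ merely \emph{reaches} $V'_{\text{cyc}}$ without lying on a cycle, since arbitrarily long matching walks then exist and $\MS(u)$ should be $\infty$. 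The intended implementation must propagate $\infty$ backwards through the dynamic program (equivalently, over the SCC condensation), which is precisely the reachability characterization you prove; your forward direction (walk to the cycle, then loop) and converse (a walk longer than $\lvert V' \rvert$ revisits a vertex, so K\"onig's lemma is not even needed---pigeonhole suffices) are both sound. Your inline pigeonhole dichotomy for \LCSP\ is a clean substitute for the paper's appeal to ultimately periodic strings, and your complexity accounting (Tarjan, reverse reachability, longest-path DP, all $O(\lvert V' \rvert + \lvert E' \rvert)$; construction via \Cref{remark:time} with $\lvert V_i \rvert \in O(\lvert E_i \rvert)$ giving $O(\lvert E_1 \rvert \cdot \lvert E_2 \rvert)$ overall) matches the paper's claims exactly. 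In short: same route as the paper, but your version is the one that would survive refereeing of the \MSP\ step.
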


\subsection{\LRSP\label{sec:LRSP}}
In \LRSP\ on general graphs we have one of the following \emph{three} cases, as seen in \Cref{fig:unboundedrep}:
\begin{enumerate}
	\item The graph has an infinite repeated string.
	\item The graph does not have any infinite repeated string, but the length of the repeated strings is unbounded.
	\item The length of the longest repeated string is bounded and there are repeated strings of a finite maximum length (as is the case for texts, trees and DAGs).
\end{enumerate}

An undesirable feature of infinite strings is that they can be aperiodic. However, analogously to some results of Büchi automata theory stating that the ``important'' strings are \emph{ultimately periodic}~\cite[p.~137] {DBLP:books/el/leeuwen90/Thomas90}, that is they are of the form $RS^\omega$ with $R \in \Sigma^*$ and $S \in \Sigma^+$,
in \Cref{lem:periodicstrings} we show that the presence of infinite repeated strings can be detected by looking for ultimately periodic strings. Its easy proof, which we omit, finds a cycle in $G \otimes G$ used by two distinct occurrences of an infinite repeated string $w$ (since $G$ is finite): we can build $RS^\omega$ by identifying $R$ as the prefix of $w$ spelled by the path reaching the cycle and $S$ as the string spelled by the cycle.

\begin{lemma}
\label{lem:periodicstrings}
	Given a $\Sigma$-labeled graph $G = (V,E,L)$, there is an infinite repeated string occurring in $G$ if and only if there is a string $RS^\omega \in \Sigma^\omega$ in $G$ spelled by two distinct walks $rs^\omega$ and $r's'^\omega$ in $G$, with $R = L(r) = L(r') \in \Sigma^*$ and $S = L(s) = L(s') \in \Sigma^+$.
\end{lemma}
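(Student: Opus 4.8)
The plan is to prove the two implications separately, noting that the nontrivial content lies entirely in the forward (``only if'') direction. For the reverse direction, suppose there is a string $RS^\omega$ with $S \in \Sigma^+$ spelled in $G$ by two distinct walks $rs^\omega$ and $r's'^\omega$. Since $S$ is nonempty, $S^\omega$ and hence $RS^\omega$ is an infinite string; as it is spelled by two distinct walks, it is by definition an infinite repeated string of $G$, so there is nothing more to do.

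For the forward direction, I would start from an infinite repeated string $w$ of $G$, witnessed by two distinct infinite walks $p = (p_0,p_1,\dots)$ and $q = (q_0,q_1,\dots)$ with $L(p) = L(q) = w$. By \Cref{lem:graphproductproperty}, the interleaving $t := p \otimes q = \big((p_0,q_0),(p_1,q_1),\dots\big)$ is an infinite walk in $G \otimes G$ that also reads $w$, where each $t_k = (p_k,q_k)$ satisfies $L'(t_k) = L(p_k) = L(q_k)$. Because $G \otimes G$ is finite, some vertex $c$ is visited by $t$ infinitely often; I will pick two indices $i < j$ with $t_i = t_j = c$ and form the ultimately periodic walk $t'$ that follows $t_0,\dots,t_{i-1}$ and then repeats the closed sub-walk $t_i,\dots,t_{j-1}$ forever (this is a legal walk in $G \otimes G$ since $t_j = t_i = c$ closes the loop). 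Setting $R := L'(t_0)\cdots L'(t_{i-1})$ and $S := L'(t_i)\cdots L'(t_{j-1})$, the walk $t'$ reads $RS^\omega$ with $\lvert S\rvert = j - i \ge 1$, so $S \in \Sigma^+$. Projecting with $\pi_1$ and $\pi_2$ and again using \Cref{lem:graphproductproperty} gives two walks $\pi_1(t')$ and $\pi_2(t')$ in $G$ of the forms $rs^\omega$ and $r's'^\omega$ (with $r,r'$ the images of the prefix and $s,s'$ the images of the repeated loop), both reading $RS^\omega$, and with $R = L(r) = L(r')$ and $S = L(s) = L(s')$ because $L'(u,v) = L(u) = L(v)$ in the self-product.

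The one step that requires care---and the main obstacle---is ensuring that the two projected walks $rs^\omega$ and $r's'^\omega$ are genuinely distinct; a careless periodization could loop away the position where $p$ and $q$ differ. The key observation is that $\pi_1(t') \ne \pi_2(t')$ precisely when $t'$ visits some vertex $(u,v)$ with $u \ne v$ (a vertex of $V'_{\text{diff}}$), since at such an index the two projections disagree. As $p \ne q$, there is a fixed index $m$ with $p_m \ne q_m$, so $t_m$ is such an off-diagonal vertex. Since $c$ is visited infinitely often, the set $\{k : t_k = c\}$ is infinite, so I can choose the loop indices to satisfy $m < i < j$; then $t_m$ lies in the nonrepeating prefix $t_0,\dots,t_{i-1}$ of $t'$, the walk $t'$ visits the off-diagonal vertex $t_m$, and the projections differ at index $m$. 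Hence $rs^\omega \ne r's'^\omega$, and $RS^\omega$ is an infinite repeated string of the required ultimately periodic form. Everything else---the pigeonhole argument for the infinitely visited vertex, the walk/product correspondence from \Cref{lem:graphproductproperty}, and the label bookkeeping---is routine.
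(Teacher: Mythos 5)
Your proposal is correct and follows essentially the same route as the paper, which omits the detailed proof but sketches exactly this argument: form the product walk in $G \otimes G$, find a revisited vertex to extract a cycle, and let $R$ be the prefix reaching the cycle and $S$ the cycle's spelling. Your explicit care in choosing the loop indices $i < j$ to come \emph{after} the index $m$ where the two occurrences differ---so that the off-diagonal vertex survives in the non-repeating prefix and the projected walks remain distinct---is precisely the detail the paper's one-line sketch glosses over, and it is handled correctly.
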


The two distinct walks spelling $RS^\omega$ provided by \Cref{lem:periodicstrings} imply the existence of a walk in $G \otimes G$ passing trough a vertex $q$ in $V'_{\text{diff}}$ and reaching a vertex $q'$ in $V'_{\text{cyc}}$. Note that $q = q'$ can hold, in which case the infinite repeated string is of the form $S^\omega$. Thus, we obtain:

\begin{corollary}[Infinite repeated strings]\label{cor:infinite}
	$G$ has an infinite repeated string if and only if any $q \in V'_{\text{diff}}$ reaches any $q' \in V'_{\text{cyc}}$, and if $R$ is the spelling of a path from $q$ to $q'$ and $S$ is the spelling of a cycle starting and ending in $q'$, then $RS^\omega$ is an infinite repeated string in $G$.
\end{corollary}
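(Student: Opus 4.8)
The plan is to prove both implications of the biconditional by translating between walks in $G$ and walks in the self-product $G \otimes G$ via \Cref{lem:graphproductproperty}, using \Cref{lem:periodicstrings} to reduce infinite repeated strings to the ultimately periodic form $RS^\omega$. For the forward direction I would assume $G$ has an infinite repeated string; then \Cref{lem:periodicstrings} supplies two \emph{distinct} walks $rs^\omega$ and $r's'^\omega$ spelling the same string $RS^\omega$, with $R = L(r) = L(r')$ and $S = L(s) = L(s') \in \Sigma^+$. Since $L(s) = L(s')$, the closed walks $s$ and $s'$ have equal length, so by \Cref{lem:graphproductproperty} the product walk $rs^\omega \otimes r's'^\omega$ is a well-defined infinite walk in $G \otimes G$ reading $RS^\omega$; its periodic tail is the infinite repetition of the \emph{closed} product walk $s \otimes s'$, whose vertices therefore all lie in a single nontrivial SCC, i.e.\ in $V'_{\text{cyc}}$. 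Because the two input walks are distinct, there is an index at which their vertices differ, and the product vertex at that index has distinct components, hence lies in $V'_{\text{diff}}$; call it $q$. Following the product walk from $q$ into its periodic tail then shows that $q$ reaches a vertex $q' \in V'_{\text{cyc}}$ (with $q = q'$ possible, exactly when the difference already occurs inside the cycle).

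For the converse I would suppose that some $q = (u,v) \in V'_{\text{diff}}$ reaches some $q' \in V'_{\text{cyc}}$. Let $R$ be the spelling of a walk from $q$ to $q'$ in $G \otimes G$, and let $S \in \Sigma^+$ be the spelling of a cycle based at $q'$, which exists since $q'$ lies in a nontrivial SCC. Concatenating the $q$-to-$q'$ walk with the infinite repetition of that cycle yields a walk in $G \otimes G$ reading $RS^\omega$, and by \Cref{lem:graphproductproperty} its two projections $\pi_1$ and $\pi_2$ are walks in $G$ that both read $RS^\omega$. These two walks are \emph{distinct}: at the index where the product walk sits at $q = (u,v)$, the first projection visits $u$ and the second visits $v$, and $u \neq v$ by $q \in V'_{\text{diff}}$. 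Hence $RS^\omega$ is an infinite string with two distinct occurrences in $G$, which simultaneously establishes the existence of an infinite repeated string and the explicit construction claimed in the statement.

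The main obstacle, and the step requiring the most care, is the bookkeeping that turns the purely combinatorial condition ``$q \in V'_{\text{diff}}$ reaches $q' \in V'_{\text{cyc}}$'' into genuinely \emph{distinct} occurrences and back again, since distinctness of two walks is witnessed by a single differing coordinate, which is precisely what membership in $V'_{\text{diff}}$ encodes. Two alignment points deserve attention: first, that $L(s) = L(s')$ forces $\lvert s \rvert = \lvert s' \rvert$, so that $s \otimes s'$ is a closed walk of the same period and the tail really is periodic rather than merely eventually cyclic; and second, the degenerate case of a cycle of length one, where one must ensure that a looping vertex is counted as lying in a nontrivial SCC (equivalently, that the input graphs carry no self-loops), so that the identification of ``involved in a cycle'' with ``belonging to an SCC of size at least two'' in the definition of $V'_{\text{cyc}}$ remains consistent.
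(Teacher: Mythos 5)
Your proof is correct and follows essentially the same route as the paper: the forward direction applies \Cref{lem:periodicstrings} and lifts the two distinct walks $rs^\omega$, $r's'^\omega$ into $G \otimes G$ via \Cref{lem:graphproductproperty}, obtaining a walk through a vertex of $V'_{\text{diff}}$ that reaches $V'_{\text{cyc}}$, while the converse concatenates the $q$-to-$q'$ path with the repeated cycle and projects back to two occurrences of $RS^\omega$ that differ at $q$. The alignment detail you highlight (that $L(s) = L(s')$ forces $\lvert s \rvert = \lvert s' \rvert$, so the product tail is genuinely periodic) and the self-loop caveat about reading $V'_{\text{cyc}}$ as ``SCC of size at least two'' are precisely the points the paper leaves implicit, so your write-up is a faithful, slightly more explicit version of its argument.
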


If the graph has no infinite repeated string, the remaining difficulty is that of repeated strings of unbounded length. Formally, we say that $G$ has \emph{repeated strings of unbounded length} if for each $n \in \mathbb{N}$ there is a repeated string $S \in \Sigma^*$ occurring in $G$ such that $\lvert S \rvert > n$. It is easy to see that in the graph of \Cref{fig:unboundedrep} (right) there are no infinite repeated strings and the unbounded repeated strings are of the form $R^+ S$, with $R,S \in \Sigma^+$. Indeed, these unbounded strings are of this form and their occurrences have a common prefix after which they diverge. This divergence happens by visiting a non-deterministic vertex, as shown by the next two results.
	
\begin{lemma}\label{lem:unboundedfamilies}
	Given a $\Sigma$-labeled graph $G = (V,E,L)$ without infinite repeated strings, $G$ has repeated strings of unbounded length if and only if there are $R,S \in \Sigma^+$ such that $R^m S$ is repeated in $G$ for each $m \ge 1$.
\end{lemma}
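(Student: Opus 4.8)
The plan is to prove both directions by reasoning in the self-product graph $G \otimes G = (V',E',L')$, using the correspondence of \Cref{lem:graphproductproperty}: a string $T$ is repeated in $G$ exactly when some walk in $G \otimes G$ spelling $T$ visits a vertex of $V'_{\text{diff}}$ (this is precisely what forces the two projected walks $\pi_1$ and $\pi_2$ to differ somewhere). The backward direction is then immediate: if $R,S \in \Sigma^+$ and $R^m S$ is repeated for every $m \ge 1$, then since $\lvert R \rvert \ge 1$ we get $\lvert R^m S \rvert = m\lvert R \rvert + \lvert S \rvert > n$ for $m > n$, so repeated strings of unbounded length exist.

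For the forward direction I would assume $G$ has no infinite repeated string but has repeated strings of unbounded length, and first record the crucial structural consequence of the no-infinite-repeat hypothesis: by the criterion of \Cref{cor:infinite}, no vertex of $V'_{\text{diff}}$ reaches any vertex of $V'_{\text{cyc}}$. Hence the subgraph of $G \otimes G$ induced by the vertices reachable from $V'_{\text{diff}}$ contains no cycle, so its longest path has a bounded length $D \le \lvert V' \rvert$. Now take, for each $n$, a walk $W_n$ in $G \otimes G$ of length $>n$ visiting $V'_{\text{diff}}$, and let $d_n \in V'_{\text{diff}}$ be the first such vertex it visits. The portion of $W_n$ after $d_n$ lives in the acyclic reachable-from-diff region, hence has length at most $D$; therefore the prefix $P_n$ of $W_n$ ending at $d_n$ has length $> n - D$, which is unbounded in $n$.

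Since $G \otimes G$ is finite, choosing $n$ large enough that $\lvert P_n \rvert > \lvert V' \rvert$ forces $P_n$ to repeat a vertex $q_1$, yielding a cycle $C$ through $q_1$ together with a path $Q$ from $q_1$ onward to $d_n$. Thus $q_1 \in V'_{\text{cyc}}$ and $q_1$ reaches $d_n \in V'_{\text{diff}}$. Note $q_1 \notin V'_{\text{diff}}$: otherwise $q_1 \in V'_{\text{diff}} \cap V'_{\text{cyc}}$ would make a diff vertex reach a cyc vertex (itself), contradicting the no-infinite-repeat hypothesis via \Cref{cor:infinite}; in particular $q_1 \neq d_n$, so $Q$ has length at least $1$. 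Let $R$ be the string spelled by one traversal of $C$ and $S$ the string spelled by $Q$ (taken to include the labels of $q_1$ and $d_n$); both are nonempty, so $R,S \in \Sigma^+$. For every $m \ge 1$ the walk that loops $m$ times around $C$ at $q_1$ and then follows $Q$ to $d_n$ spells $R^m S$ and visits $d_n \in V'_{\text{diff}}$; hence $R^m S$ is repeated in $G$ for all $m \ge 1$, as required.

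The main obstacle is the forward direction, specifically the need to localize the unavoidable cycle to the part of the walk occurring before the first diff vertex: the argument hinges on converting ``no infinite repeated string'' into genuine acyclicity of the region reachable from $V'_{\text{diff}}$ (so that unbounded walk length cannot be absorbed by the suffix), and on verifying that the pumped cycle vertex $q_1$ is itself not a diff vertex, which is exactly what keeps the construction consistent with the hypothesis and guarantees $S \neq \varepsilon$. The remaining bookkeeping — reconciling the exact boundary labels when concatenating the $m$-fold looped cycle with the tail $Q$ so that the spelled string is precisely $R^m S$ — is routine and I would not belabor it.
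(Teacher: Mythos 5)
Your proof is correct, and your use of \Cref{cor:infinite} is legitimate: that corollary is established from \Cref{lem:periodicstrings} before this lemma, so there is no circularity. The core engine is the same as the paper's---pigeonhole a cycle in a long walk of $G \otimes G$ and pump it, with the divergence vertex from $V'_{\text{diff}}$ occurring after the cycle---but your route to placing the cycle \emph{before} the divergence is genuinely different. The paper's proof is self-contained: it takes a single repeated string of length exceeding $\lvert V' \rvert$, forms the product walk $q = p \otimes p'$, finds repeated positions $j < j'$, and case-splits on the divergence index $i$ relative to $[j,j']$; the cases $i < j$ and $j \le i \le j'$ are refuted by constructing an infinite repeated string inline (in effect re-deriving the content of \Cref{cor:infinite}), and the surviving case $j < j' < i$ produces the family $R^m S$. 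You instead cite \Cref{cor:infinite} once, as a black box, to turn the no-infinite-repeat hypothesis into a structural fact---no vertex of $V'_{\text{diff}}$ reaches $V'_{\text{cyc}}$, hence the region reachable from $V'_{\text{diff}}$ is acyclic---which eliminates both bad cases a priori: the suffix of a product walk after its first diff vertex has length at most $D$, so unbounded repeats force unboundedly long prefixes, and the pigeonholed cycle must lie in such a prefix, entirely among non-diff vertices. Your route trades the paper's three-way case analysis for a small quantitative detour (the bound $D$ and the first-diff-vertex bookkeeping) and a dependency on the corollary, whereas the paper's inline analysis directly exposes the structure (``the divergence index occurs after every cycle of the walk'') that the paper immediately reuses to justify \Cref{cor:unbounded}. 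Nothing downstream is lost in your version, though, since it supplies the same ingredients: your $q_1$ is a non-diff vertex, hence of the form $(u,u)$, lying on a cycle and reaching a diff vertex by a path of length at least one.
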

\begin{proof}
	($\Leftarrow$) This side is trivial. ($\Rightarrow$) Let $G \otimes G = (V', E', L')$. If $G$ has repeated strings of unbounded length, then there must be some $k \in \mathbb{N}$ such that there is a repeated string $T \in \Sigma^*$ of length $k > \lvert V' \rvert$, with $p = (p_0, p_1, \dots, p_{k-1})$ and $p' = (p_0', p_1', \dots, p_{k-1}')$ two distinct occurrences of $T$ in $G$. Then $q \coloneqq (q_0, q_1, \dots, q_{k-1}) \coloneqq p \otimes p'$ is a walk in $G \otimes G$ visiting more than $\lvert V' \rvert$ vertices, so by the pigeonhole principle there must be a vertex visited more than once: let $j,j' \in \mathbb{N}$ be two indices such that
	$0 \le j < j' \le k-1$
	and
	$q_j = q_{j'}$.
	Since $p$ and $p'$ are distinct walks in $G$, there must be also an index $i \in \mathbb{N}$ such that $0 \le i \le k-1$ and $p_i \neq p_i'$. Index $i$ can be in three different positions relative to $j$ and $j'$:
	\begin{enumerate}
		\item if $i < j < j'$, then $(q_i, \dots, q_{j-1})(q_j, \dots, q_{j'-1})^\omega$ is an infinite and ultimately periodic walk in $G \otimes G$ and its projections are occurrences of an ultimately periodic, infinite and repeated string in $G$, since $\pi_1 (q_i) = p_i \neq p_i' = \pi_2(q_i)$, contradicting our hypothesis;
		\item if $j \le i \le j'$, then $(q_j, \dots, q_{j'-1})^\omega$ is a periodic walk in $G \otimes G$ and its projections are occurrences of a periodic, infinite and repeated string in $G$, a contradiction;
		\item if $j < j' < i$, then $(q_j, \dots, q_{j' - 1})(q_{j'}, \dots, q_i)$ is a walk in $G \otimes G$ with a cyclic prefix that can be pumped, so $(q_j, \dots, q_{j'-1})^m(q_{j'}, \dots, q_i)$ is a walk in $G \otimes G$ for each $m \ge 1$; the projections in $G$ of these strings are occurrences of repeated strings of the form $R^m S$, with $R = L' \big( (q_j, \dots, q_{j'-1}) \big)$ and $S = L' \big( (q_{j'}, \dots, q_i) \big)$,	since $\pi_1 (q_i) = p_i \neq p_i' = \pi_2 (q_i)$. \qedhere
	\end{enumerate}
\end{proof}

Point 3.\ of the above proof shows that the index where the projections of the distinct walks differ must occur after every cycle of the walk considered in $G \otimes G$, proving that any of these walks has a proper prefix of vertices of the form $(u,u)$ containing a cycle and this prefix ends in a vertex $(v,v) \in V_{\text{ndet}}'$. Note that $(u,u) = (v,v)$ can hold. We obtain the following result.

\begin{corollary}[Unbounded repeated strings]\label{cor:unbounded}
	If $G$ has no infinite repeated string, then $G$ has 
	repeated strings of unbounded length if and only if any $(u,u) \in V_{\text{cyc}}'$ reaches any $(v,v) \in V_{\text{ndet}}'$ with a path, and if $R$ is the spelling of a cycle starting and ending in $(u,u)$ and $S$ is the spelling of a path starting from $(u,u)$ and ending with $(v,v)$ and with an out-neighbor of $(v,v)$ with a sibling having the same label, then $R^m S$ is a repeated string for each $m \ge 1$.
\end{corollary}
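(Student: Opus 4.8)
The plan is to prove the biconditional by leaning on \Cref{lem:unboundedfamilies}, which already reduces ``$G$ has repeated strings of unbounded length'' to the existence of $R,S\in\Sigma^+$ with $R^mS$ repeated for every $m\ge 1$. Throughout I write $G\otimes G=(V',E',L')$ and use the basic correspondence (\Cref{lem:graphproductproperty}) that, for any walk $q$ in $G\otimes G$, both projections $\pi_1(q)$ and $\pi_2(q)$ are walks in $G$ spelling the same string $L'(q)$. It is cleanest to prove the two implications in the opposite order to the statement.

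For the direction from reachability to unbounded strings (which also produces the explicit witness $R^mS$), assume some $(u,u)\in V'_{\text{cyc}}$ reaches some $(v,v)\in V'_{\text{ndet}}$. Since $(u,u)$ belongs to a non-trivial SCC, it lies on a cycle $c$ of $G\otimes G$ with $L'(c)=R\in\Sigma^+$; let $\rho$ be a path from $(u,u)$ to $(v,v)$ and put $S'=L'(\rho)$, and let $w\neq w'$ be the two equally-labeled out-neighbors of $v$ witnessing $(v,v)\in V'_{\text{ndet}}$. For each $m\ge 1$ the walk $c^m\rho$ runs from $(u,u)$ to $(v,v)$, so its projections $\pi_1(c^m\rho)$ and $\pi_2(c^m\rho)$ are walks in $G$ that both end at $v$ and both spell $R^mS'$. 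Appending the edge $v\to w$ to the first and $v\to w'$ to the second yields two walks spelling $R^mS$, with $S:=S'\,L(w)\in\Sigma^+$; they are distinct because their final vertices $w,w'$ differ. Hence $R^mS$ is repeated for every $m$, and the trivial direction of \Cref{lem:unboundedfamilies} delivers unbounded repeated strings. The degenerate case $(u,u)=(v,v)$ is identical, taking $\rho$ to be the trivial path.

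For the converse I would reuse the case analysis inside the proof of \Cref{lem:unboundedfamilies}. Take a repeated string $T$ with $\lvert T\rvert>\lvert V'\rvert$ and two distinct occurrences $p,p'$, set $q=p\otimes p'$, and let $i^*$ be the \emph{first} index at which $p$ and $p'$ differ. Then every $q_0,\dots,q_{i^*-1}$ is a diagonal vertex $(x,x)$, and $q_{i^*-1}=(v,v)$ is non-deterministic, since from $v$ the two occurrences proceed to the distinct but equally-labeled vertices $p_{i^*},p'_{i^*}$; thus $(v,v)\in V'_{\text{ndet}}$. Because $\lvert T\rvert>\lvert V'\rvert$, the walk $q$ repeats a vertex; points 1 and 2 of the proof of \Cref{lem:unboundedfamilies} show that any repeated-vertex pair lying at or after $i^*$ would yield an ultimately periodic \emph{infinite} repeated string, which is excluded by hypothesis (cf.\ \Cref{cor:infinite}). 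Consequently every such cycle is confined to the diagonal prefix $q_0,\dots,q_{i^*-1}$; any diagonal vertex $(u,u)$ on it satisfies $(u,u)\in V'_{\text{cyc}}$ and reaches $(v,v)$ along the prefix, which is exactly the configuration sought.

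The main obstacle is this last localization argument: one must insist on the \emph{first} divergence index $i^*$, so that the entire prefix consists of vertices $(x,x)$ (guaranteeing that the cycle we extract is genuinely a diagonal vertex of $V'_{\text{cyc}}$ and not an off-diagonal one), and then rule out every cycle touching or following $i^*$ by appealing to the infinite-string cases of \Cref{lem:unboundedfamilies}. A small boundary subtlety is that $i^*\ge 1$: a divergence already at position $0$ would, by the same exclusion, force an infinite repeated string, contradicting the hypothesis, so the diagonal prefix is non-empty. By contrast, the reverse (pumping) direction is routine once the projection correspondence of \Cref{lem:graphproductproperty} is in hand.
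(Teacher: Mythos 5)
Your proof is correct and takes essentially the same route as the paper: both directions rest on \Cref{lem:unboundedfamilies}, with the sufficiency ($\Leftarrow$) shown by pumping the cycle and splitting at the two equally-labeled out-neighbors, and the necessity ($\Rightarrow$) extracted from the case analysis in that lemma's proof, which forces every cycle of $q = p \otimes p'$ to precede every divergence index, hence to lie in the diagonal prefix ending at a vertex of $V'_{\text{ndet}}$. Your explicit use of the \emph{first} divergence index $i^*$ and the boundary check $i^* \ge 1$ just spells out carefully what the paper states tersely in the paragraph following \Cref{lem:unboundedfamilies}.
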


We solve \LRSP\ on the general graph $G$ by combining \Cref{cor:infinite,cor:unbounded} and \Cref{thm:alg-dags}:

\begin{description}
	\item[\textbf{Infinite length repeats}] Check if any $q \in V'_{\text{diff}}$ reaches any $q' \in V'_{\text{cyc}}$ even with an empty path. If so, return ($i$) the string spelled by the path from $q$ to $q'$, ($ii$) the string spelled by any cycle starting from $q'$ and ($iii$) the symbol $\omega$. 
	\item[\textbf{Unbounded length repeats}] Check if any $(u,u) \in V'_{\text{cyc}}$ reaches any $(v,v) \in V'_{\text{ndet}}$ even with an empty path. If so, return ($i$) the string spelled by any cycle starting (and ending) at $(u,u)$, ($ii$) the symbol $+$ and ($iii$) the string spelled by the path from $(u,u)$ to an out-neighbor of $(v,v)$ with a sibling having the same label (since $(v,v) \in V'_{\text{ndet}}$).
	\item[\textbf{Finite length repeats}] Remove from $G \otimes G$ all vertices in $V'_{\text{cyc}}$ (obtaining a DAG), and proceed as in the algorithm for the DAG case from \Cref{sec:alg-dags} on this graph. 
\end{description}

\begin{theorem}
\label{thm:main-algorithm}
The above algorithm correctly solves \LRSP. Moreover, if $G \otimes G$ is given, it can be implemented to run in time linear in the size of $G \otimes G$. If not, it runs in time $O( \lvert E \rvert^2)$, where $E$ is the edge set of $G$.
\end{theorem}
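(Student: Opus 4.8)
The plan is to establish correctness by showing that the three cases in the algorithm are mutually exclusive and exhaustive, and that each branch correctly identifies its case and produces a valid witness. The three-case dichotomy is already set up: case 1 (infinite repeated string), case 2 (no infinite repeat but unbounded-length repeats), case 3 (bounded maximum length). First I would verify exhaustiveness and exclusivity: every instance of \LRSP\ falls into exactly one of these, which follows from the definitions together with the observation that an infinite repeat forces the unbounded case to be vacuously handled first, so the ordering of the branches matters. The key is that the algorithm checks the cases in order, so I must argue that reaching a later branch means the earlier conditions failed, and that this failure is exactly the hypothesis needed to invoke the corresponding characterization.

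Second, I would handle correctness of each branch by direct appeal to the results already proved. For the \textbf{Infinite length repeats} branch, correctness is immediate from \Cref{cor:infinite}: $G$ has an infinite repeated string iff some $q \in V'_{\text{diff}}$ reaches some $q' \in V'_{\text{cyc}}$, and the corollary also certifies that the reported $RS^\omega$ is a genuine infinite repeat. For the \textbf{Unbounded length repeats} branch, we have already failed the infinite-repeat test, so the hypothesis of \Cref{cor:unbounded} (``$G$ has no infinite repeated string'') holds, and the corollary gives correctness of both the reachability test from $V'_{\text{cyc}}$ to $V'_{\text{ndet}}$ and the reported family $R^m S$. For the \textbf{Finite length repeats} branch, both earlier tests failed, so $G$ has no infinite repeat and no unbounded-length repeats; hence the maximum repeat length is finite, and I would argue that removing $V'_{\text{cyc}}$ from $G \otimes G$ yields a DAG on which the \Cref{thm:alg-dags} procedure for \LRSP\ (finding a maximum-length path through a vertex of $V'_{\text{diff}}$) returns a longest repeated string. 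The subtle point here is justifying that no finite longest repeat is lost by deleting $V'_{\text{cyc}}$: a finite repeat whose product walk in $G \otimes G$ passes through a cyclic vertex could be pumped, contradicting boundedness, so in the bounded case every maximum-length repeat corresponds to a path avoiding $V'_{\text{cyc}}$.

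For the complexity claim, the plan is to bound the running time by the size of $G \otimes G$. Computing the SCCs (hence $V'_{\text{cyc}}$), identifying $V'_{\text{diff}}$ and $V'_{\text{ndet}}$, and running the reachability checks are all linear-time graph operations; the DAG subroutine from \Cref{thm:alg-dags} is linear in the size of the product graph, and deleting $V'_{\text{cyc}}$ is linear as well. Each reachability query is phrased as reachability from a designated source set to a designated target set, solvable by a single graph search after contracting SCCs, so the total is $O(\lvert V' \rvert + \lvert E' \rvert)$. When $G \otimes G$ is not given, we invoke \Cref{remark:time} to bound its construction and size by $O(\lvert E \rvert^2)$, giving the stated $O(\lvert E \rvert^2)$ bound. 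I would also note that the reported output is a linear-size representation (a prefix string plus a period string plus the marker $\omega$ or $+$, or an explicit finite string), so outputting it does not exceed the time budget.

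The main obstacle I expect is the finite-case argument: showing rigorously that in case 3 the optimal repeat is preserved after deleting all cyclic vertices. Everything else reduces cleanly to the already-established corollaries and \Cref{thm:alg-dags}, but the interaction between ``bounded length'' and ``the product walk avoids $V'_{\text{cyc}}$'' needs a pumping-style argument: if a maximum-length finite repeat used a cyclic product vertex, one could insert a cycle and obtain a strictly longer repeat (still finite, since we are not in the infinite case), contradicting maximality. Making this precise — in particular ensuring the pumped walk still witnesses \emph{distinct} occurrences, i.e.\ still visits a vertex of $V'_{\text{diff}}$ — is the delicate step, and it mirrors the reasoning already used in the proof of \Cref{lem:unboundedfamilies}.
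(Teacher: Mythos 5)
Your proposal is correct, and it follows the paper's overall skeleton: \Cref{cor:infinite} and \Cref{cor:unbounded} certify the first two branches, and the work is in justifying that deleting $V'_{\text{cyc}}$ is safe before running the \Cref{thm:alg-dags} procedure. Where you genuinely diverge is in how that key step is proved. The paper argues \emph{structurally} from the failure of the two checks: since the first check failed, $V'_{\text{diff}} \cap V'_{\text{cyc}} = \emptyset$ and no vertex of $V'_{\text{diff}}$ reaches $V'_{\text{cyc}}$; and no vertex of $V'_{\text{cyc}}$ can reach $V'_{\text{diff}}$, because on a shortest such path the predecessor of the first $V'_{\text{diff}}$ vertex would be a vertex $(v,v)$ whose two out-neighbor projections are distinct and equally labeled, i.e.\ a vertex of $V'_{\text{ndet}}$ reachable from $V'_{\text{cyc}}$, contradicting the failure of the second check. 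Hence every product walk witnessing a repeat avoids $V'_{\text{cyc}}$ outright. You instead first convert the two failed checks, via the ``if and only if'' directions of the corollaries, into the statement that the instance is in the bounded case, and then run a pumping argument: any witnessing walk touching $V'_{\text{cyc}}$ could have a cycle inserted (the $V'_{\text{diff}}$ vertex is not deleted by pumping, so distinctness of the projections is preserved), yielding repeats of unbounded length and a contradiction. Both arguments are sound and yield the same conclusion; the paper's buys a sharper structural fact (mutual unreachability of $V'_{\text{diff}}$ and $V'_{\text{cyc}}$, with no appeal to the corollaries' converse directions), while yours is arguably more self-contained in spirit, reusing the pumping technique of \Cref{lem:unboundedfamilies} and making the interaction between boundedness and cycle-avoidance explicit. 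Your complexity analysis coincides with the paper's.
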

\begin{proof}
If $G$ has infinite repeated strings, then \Cref{cor:infinite} guarantees the correctness of the first check. Otherwise, \Cref{cor:unbounded} guarantees the correctness of the second check.

Suppose now that both of these checks return false. First, since the first check failed, $V'_{\text{diff}} \cap V'_{\text{cyc}} = \emptyset$, because any vertex in $V'_{\text{diff}} \cap V'_{\text{cyc}}$ reaches itself with an empty path. Second,  from any $q \in V'_{\text{cyc}}$ no vertex $q' \in V'_{\text{diff}}$ is reachable (with a non-empty path, since $V'_{\text{diff}} \cap V'_{\text{cyc}} = \emptyset$). Indeed, suppose for a contradiction that $q' \in V'_{\text{diff}}$ is a vertex reached from $q$ with a shortest (non-empty) path $P$, and let $q^* \in V'$ be the vertex on this path right before $q'$. Since $P$ is shortest, then $q^* \notin V'_{\text{diff}}$, and thus $q^* \in V'_{\text{ndet}}$. However, this contradicts the assumption that the second check of the algorithm returned false.

Finally, since the two occurrences of a repeated string must pass through a vertex in $V'_{\text{diff}}$, and no vertex in $V'_{\text{diff}}$ is reached, or reaches a vertex in $V'_{\text{cyc}}$, then we can remove all vertices in $V'_{\text{cyc}}$ from $G \otimes G$, obtaining a DAG. In this DAG, as in \Cref{sec:alg-dags}, we look for the longest path passing through a vertex in $V'_{\text{diff}}$.

The SCCs of $G \otimes G$ and the sets $V'_{\text{cyc}}$, $V'_{\text{diff}}$, $V'_{\text{ndet}}$ can be computed in linear time in the size $G \otimes G$. Reachability between two sets of vertices of $G \otimes G$ (and a corresponding path) can also be implemented in linear time in the size of $G \otimes G$. The algorithm for the final DAG case runs in linear time in the size of $G \otimes G$, by \Cref{thm:alg-dags}.
\end{proof}

\section{Hardness}
\label{sec:hardness}

The NP-hardness of the \SMLG\ problem defined on path occurrences (implying the NP-hardness of \LCSP\ and of \MSP, also defined on path occurrences) was already observed in previous works such as~\cite{DBLP:conf/icalp/EquiGMT19}. We similarly observe that the same holds also for \LRSP.

\begin{observation}\label{thm:np-path}
	\LRSP\ defined on paths is NP-hard, even if we restrict alphabet $\Sigma$ to contain just a single character. This follows by reducing from the Hamiltonian Path problem on directed graphs. Given a graph $G$, create a graph $G'$ made up of two copies of $G$ and label all vertices with the same character. It easily holds that $G$ has a Hamiltonian path if and only if the length of the longest repeated string in $G'$ equals the number of vertices of $G$.
\end{observation}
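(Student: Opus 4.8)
The plan is to give a polynomial-time (in fact linear-time) many-one reduction from the \textsf{Directed Hamiltonian Path} problem, which is NP-complete. Given an instance $G = (V,E)$ with $\lvert V \rvert = n$, I would build exactly the $\Sigma$-labeled graph $G'$ of the statement: two vertex-disjoint copies of $G$, with every vertex labeled by a single symbol $\mathtt{a}$, so that $\Sigma = \lbrace \mathtt{a} \rbrace$. This construction is computable in $O(\lvert V \rvert + \lvert E \rvert)$ time, so it suffices to prove the stated equivalence, namely that $G$ has a Hamiltonian path if and only if the longest repeated string of $G'$ (under path occurrences) has length exactly $n$.

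The observation driving both directions is that over a unary alphabet the spelling of a path depends only on its number of vertices: a path visiting $m$ vertices spells $\mathtt{a}^m$. Moreover, since $G'$ has two components with no edges between them, every path of $G'$ lies entirely inside one copy and hence visits at most $n$ distinct vertices. Consequently the longest string occurring in $G'$ at all has length at most $n$, and a string of length exactly $n$ occurs if and only if some copy admits a path through all of its $n$ vertices, i.e.\ a Hamiltonian path.

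For the forward direction, if $G$ has a Hamiltonian path then both copies contain the corresponding path; these two occurrences spell the same string $\mathtt{a}^n$ and are distinct in the sense of the preliminaries, since they use disjoint vertex sets. Thus $\mathtt{a}^n$ is repeated, and by the length bound above it is a longest repeated string, of length $n$. For the converse, a repeated string of length $n$ comes with two distinct path occurrences, each of which (again by the length bound) visits $n$ distinct vertices inside a single copy and is therefore a Hamiltonian path of that copy; any such path is a Hamiltonian path of $G$.

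The only genuinely delicate point, and the reason the construction uses two copies rather than one, is guaranteeing two \emph{distinct} occurrences: a single copy might contain a unique Hamiltonian path and hence only one occurrence of $\mathtt{a}^n$, whereas \LRSP\ requires at least two distinct occurrences. Duplicating $G$ supplies the second occurrence for free while creating no path longer than $n$ vertices, precisely because the copies are disconnected, so the length bound and the Hamiltonicity characterization both remain intact. I expect this interplay, ensuring repetition without inflating the maximum achievable path length, to be the main subtlety to argue carefully.
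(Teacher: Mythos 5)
Your proposal is correct and matches the paper's own argument exactly: the same reduction from directed Hamiltonian Path via two disjoint unary-labeled copies of $G$, with the same equivalence between Hamiltonicity and a longest repeated string of length $\lvert V \rvert$. Your additional care about why two copies are needed (to guarantee two \emph{distinct} path occurrences without inflating the maximum path length) is precisely the implicit point of the paper's construction.
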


As noted in the introduction, the quadratic lower bounds of \cite{DBLP:conf/icalp/EquiGMT19,DBLP:conf/sosa/GibneyHT21} for \SMLG\ imply the same quadratic lower bounds for \LCSP\ and \MSP, namely that the two problems cannot be solved in truly sub-quadratic time under the Orthogonal Vectors Hypothesis (that we discuss below in this section) and that the shaving from the quadratic-time complexity of arbitrarily high or high enough logarithmic factors would contradict other hardness conjectures. We now show a linear-time reduction from \SMLG\ to \LRSP\ on deterministic DAGs, which thus implies the same lower bounds for \LRSP\ as  in~\cite{DBLP:conf/icalp/EquiGMT19,DBLP:conf/sosa/GibneyHT21} (since they hold also for deterministic DAGs).

\begin{theorem}
\label{thm:SMLG-reduction}
Given a string $P \subseteq \Sigma^*$ and a $\Sigma$-labeled deterministic DAG $G = (V,E,L)$, there exists a $\Sigma'$-labeled DAG $G'$, with $\Sigma' = \Sigma \cup V$, having $O( \lvert V \rvert + \lvert E \rvert + \lvert P \rvert )$ vertices and edges, computable in linear time in the size of $P$ and $G$, and such that $P$ has an occurrence in $G$ if and only if the longest repeated string of $G'$ has length $\lvert V \rvert + \lvert P \rvert + 1$.
\end{theorem}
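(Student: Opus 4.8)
The plan is to reduce \SMLG\ to \LRSP\ by constructing a DAG $G'$ in which a repeated string of the maximal target length $\lvert V \rvert + \lvert P \rvert + 1$ can exist \emph{only} by forcing two distinct occurrences to share a path that spells $P$ somewhere inside $G$, while all other repetitions are capped below this threshold. The key design idea is to augment each vertex $v \in V$ with a unique private label drawn from the fresh alphabet $V \subseteq \Sigma'$ (this is why $\Sigma' = \Sigma \cup V$), so that the only way for two walks to spell a \emph{long} common string is to eventually synchronize onto the \emph{same} vertices of $G$ and read identical private labels — except for one controlled region where they may diverge, namely a copy of $G$ in which the pattern $P$ is matched.

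\textbf{Construction.} First I would take two copies of $G$, but make them ``almost deterministic'': in a deterministic DAG, two distinct walks reading the same $\Sigma$-string must differ in at least one vertex, and the private $V$-labels will immediately break any repetition unless the walks agree vertex-by-vertex. To create a single place where divergence is forced, I would build a gadget spelling $P$ using the pattern characters $P[1]\cdots P[\lvert P\rvert]$ as a path $G_P$, attach it so that an occurrence of $P$ in $G$ (a walk spelling $P$) gets ``mirrored'' against $G_P$, and cap the contribution with one extra sentinel symbol (accounting for the $+1$). Concretely, the two occurrences of the sought repeated string will project (via \Cref{lem:graphproductproperty}) onto two distinct walks in $G'$ that agree on a length-$\lvert V\rvert$ prefix of private vertex labels, then read a common spelling of $P$ of length $\lvert P\rvert$, then a final synchronizing symbol; distinctness of the two walks is guaranteed precisely because one copy uses the matched region of $G$ and the other uses $G_P$, so $u \neq v$ at some position. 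Because $G$ is a deterministic DAG, the private labels ensure no \emph{spurious} repeated string can reach length $\lvert V\rvert + \lvert P\rvert + 1$: any repetition that does not route through a genuine $P$-occurrence must fail to match a private label and thus be strictly shorter.

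\textbf{Correctness.} The forward direction: if $P$ occurs in $G$ along some walk $w$, I exhibit two explicit distinct walks in $G'$ whose common spelling has length exactly $\lvert V\rvert + \lvert P\rvert + 1$, using $w$ in one copy and $G_P$ in the other, synchronized by the private labels and the sentinel. The reverse direction is the delicate one: I must argue that \emph{any} repeated string of length $\lvert V\rvert + \lvert P\rvert + 1$ forces a $P$-occurrence in $G$. This uses determinism crucially — two distinct occurrences of the same string, combined with uniqueness of private labels, pin down where and how the two walks are allowed to diverge, and the only admissible divergence is across the pattern-matching gadget, which certifies that the corresponding walk in $G$ spells $P$. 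The size and linear-time bounds are immediate from the construction: $G'$ has $O(\lvert V\rvert + \lvert E\rvert + \lvert P\rvert)$ vertices and edges, and each gadget is assembled by a single pass over $G$ and $P$.

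\textbf{The main obstacle} I expect is tuning the construction so that the maximal repeated length is \emph{exactly} $\lvert V\rvert + \lvert P\rvert + 1$ rather than merely ``at least'' or ``bounded by'' it — i.e., simultaneously (a) making a genuine $P$-occurrence achieve this length, (b) preventing any competing repetition (e.g., two walks that synchronize on private labels without ever matching $P$, or that match $P$ partially and then exploit another coincidence) from reaching or exceeding it, and (c) keeping $G'$ a \emph{deterministic} DAG so that the stated lower bounds (inherited from \cite{DBLP:conf/icalp/EquiGMT19,DBLP:conf/sosa/GibneyHT21}, which hold for deterministic DAGs) actually apply. The private-label alphabet $V$ is the lever that resolves (b): by making every long repetition ``pay'' for agreement on unique labels, the only length-$(\lvert V\rvert+\lvert P\rvert+1)$ repeats are those that certify a $P$-match. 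I would verify the exact-length bookkeeping on the gadget last, since that is where off-by-one errors and unintended repetitions hide.
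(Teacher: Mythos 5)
Your proposal shares the paper's high-level ingredients---labels drawn from $\Sigma \cup V$ used as vertex ``names'', a path $G_P$ spelling $P$, one occurrence routed through $G$ and the other through $G_P$, and determinism as the mechanism excluding spurious repeats---but the actual construction, which is the entire content of this theorem, is never pinned down, and the fragments you do commit to would fail. Starting from ``two copies of $G$'' is already a dead end: if the copies carry identical labels, every string occurring in $G$ is automatically repeated once per copy, so the longest repeated string measures the longest string of $G$ and says nothing about $P$; if instead the private labels differ between the copies, no long string is repeated at all. Your intended repeated string, ``a length-$\lvert V\rvert$ prefix of private vertex labels, then $P$, then a final synchronizing symbol'', is also not realizable: reading $\lvert V\rvert$ distinct private labels inside a copy of $G$ would force a Hamiltonian-like traversal, while reading them along a separate fixed path $v_1\cdots v_n$ that then fans out into all vertices of $G$ makes the junction non-deterministic, and two equally-labeled vertices of $G$ then yield within-component repeated strings of length up to $2\lvert V\rvert+1$, destroying the ``only if'' direction whenever $\lvert V\rvert \ge \lvert P\rvert$. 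Finally, a sentinel at the \emph{end} does no synchronizing work; the $+1$ must sit \emph{between} the padding and $P$, since its job is to force the $G$-side occurrence to enter $G$ at one specific named vertex.

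The paper's gadget is exactly this missing mechanism: $G'$ consists of two components, each headed by a path of $n = \lvert V\rvert$ vertices spelling $a^n$ for an ordinary $a \in \Sigma$ (padding, which forces any sufficiently long repeated string to start at the two sources), fanning out into a level of $n$ vertices labeled with the $n$ \emph{distinct} vertex names. In the first component the vertex named $v_i$ has a single edge to vertex $v_i$ of $G$; in the second, all named vertices point to the source of $G_P$. Because the fan-out labels are pairwise distinct and $G$ is deterministic, $G'$ is a deterministic DAG; hence each string has at most one occurrence starting at each source, within-component repeats have length at most $\max(\lvert V\rvert, \lvert P\rvert - 1)$, and a repeated string of length $\lvert V\rvert + \lvert P\rvert + 1$ is necessarily of the form $a^{n}u_1P$ with one occurrence per component---which exists if and only if $P$ occurs in $G$ starting at vertex $u_1$. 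You defer precisely this ``exact-length bookkeeping on the gadget'' to the end, but that bookkeeping \emph{is} the proof; as written, the reduction is not established. (A minor additional point: your appeal to \Cref{lem:graphproductproperty} is misplaced here, since the product graph plays no role in this reduction; the occurrences of the repeated string simply \emph{are} walks in $G'$.)
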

\begin{proof}
Given a deterministic graph $G = (V,E,L)$ and a pattern $P = P[1]\cdots P[m]$ labeled on alphabet $\Sigma$, with $V = \lbrace v_1, \dots, v_n \rbrace$ and $m,n > 0$, the reduction consists of transforming pattern $P$ into a labeled graph $G_P$, that is a path of $m$ vertices spelling $P$, and building a $\Sigma'$-labeled graph $G'$ with $\Sigma' = \Sigma \cup V$, assuming $V \cap \Sigma = \emptyset$. Graph $G'$, as seen in Figure~\ref{fig:smlgreduction}, contains $G$, $G_P$, and two copies $H_1$, $H_2$ of a simple gadget $H$ appropriately connected to them. Gadget $H$ is made of a path of $n$ vertices spelling $a^n$, with $a \in \Sigma$ chosen arbitrarily, ending in a level of $n$ vertices each labeled with a different vertex of $V$. In $H_1$ each of these final vertices is connected to the respective vertex of $G$ and in $H_2$ they are all connected to the source of $G_P$. The resulting graph $G'$ is a deterministic DAG made of two connected components each having exactly one source, and it is easy to see that the longest repeated string of $G'$ has length at most $\lvert V \rvert + \lvert P \rvert + 1$. Each repeated string of this maximum length has one occurrence per component, starting at the respective source. If $P$ has an occurrence $(u_1, \dots, u_m)$ in $G$, then $a^{n} u_1 P$ is a repeated string in $G'$ of maximum length. Conversely, every repeated string in $G'$ of maximum length $\lvert V \rvert + \lvert P \rvert + 1$ is of the form $a^n u_i P$, with $u_i \in V$, and its occurrence in the first component has as its suffix an occurrence of $P$ in $G$.

Graph $G'$ has $O( \lvert V \rvert + \lvert E \rvert + \lvert P \rvert )$ vertices and edges and its construction is straightforward, so the reduction takes linear-time in the size of the starting \SMLG\ instance.
\end{proof}

\begin{figure}[btp]
     \centering
     \includegraphics{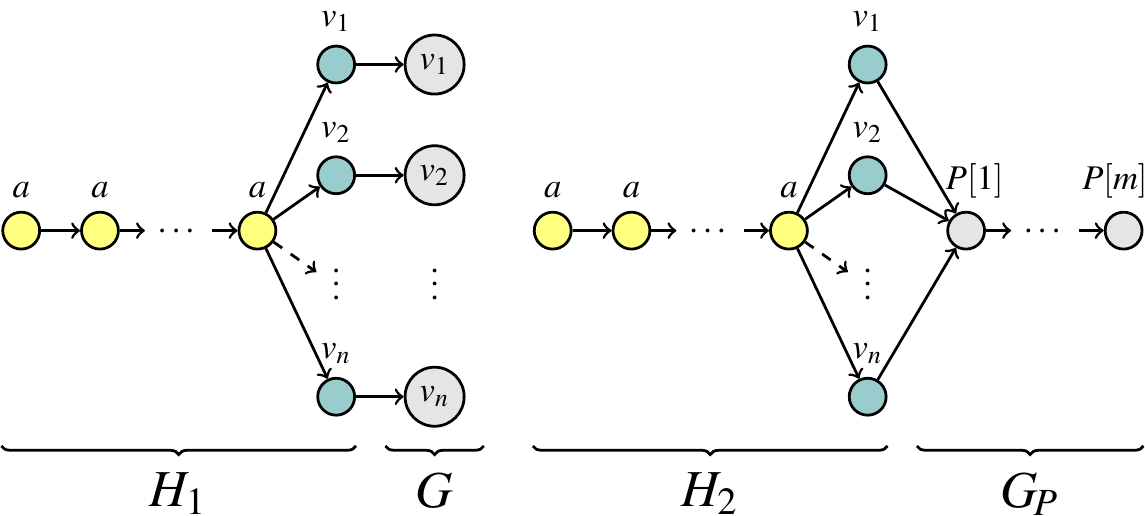}
     \caption{Scheme for the reduction of \SMLG\ to \LRSP: $H_1$ and $H_2$ are two copies of the same gadget made of $n+1$ levels and the edges of $G$ are not shown. Note that the reduction holds only if $G$ is a deterministic DAG.}\label{fig:smlgreduction}
\end{figure}

Two interesting aspects of \Cref{thm:SMLG-reduction} are as follows:
\begin{itemize}
    \item the reduction does not hold if $G$ contains cycles or if $G$ is a DAG with some non-deterministic vertices, because there could be infinite repeated strings in $G$ or the repeated strings of $G$ could be extended by gadget $H_1$;
    \item to the best of our knowledge, such a reduction does not exist when the problems are defined on strings.
\end{itemize}
Nevertheless, this reduction creates an instance of \LRSP\ with vertices of arbitrarily high in-degree and out-degree, and also increases the alphabet size by the number of edges of the graph. Therefore, in the rest of this section we give a direct reduction from the Orthogonal Vectors Problem (\OVP) to \LRSP, which will allow for both constant in- and out-degree, and binary alphabet.

In \OVP\ we are given two sets of binary vectors $A, B \subseteq \lbrace 0,1 \rbrace^d$, with $\lvert A \rvert = \lvert B \rvert = n$ and $d = \omega(\log n)$, and we need to determine whether there exist $a \in A$, $b \in B$ so that $a \cdot b = 0$, where $a \cdot b = \sum_{i=1}^{d} a[i] \cdot b[i]$. The Orthogonal Vectors Hypothesis (\OVH) states that no (randomized) algorithm can solve \OVP\ on instances of size $n$ in $O(n^{2-\varepsilon}\poly(d))$ time for constant $\varepsilon > 0$~\cite{Williams05}. Given an instance $A$, $B$ for \OVP, we will construct a DAG $G$ such that $A$ and $B$ contain a pair of orthogonal vectors if and only if the length of the longest repeated string in $G$ is of a certain value, to be introduced at the end of the reduction.

To start with, we use the alphabet $\Sigma = \lbrace 0,1,\mathtt{c} \rbrace$, where $\mathtt{c}$ is used to simplify the proofs. At the end, we will observe that all $\mathtt{c}$-labeled vertices can be relabeled with $0$.
We start by building two types of gadgets:
\begin{itemize}
	\item for each $a = (a[1], \dots, a[d]) \in A$, graph $G_a$ is a path consisting of a starting \texttt{c}-labeled vertex followed by $d$ vertices, where the $i$-th vertex is labeled with $a[i]$ (Figure~\ref{fig:gadgets}, left);

	\item for each $b = (b[1], \dots, b[d]) \in B$, graph $\overline{G}_b$ is a DAG with $d+1$ levels such that: (i) the zeroth level consists of a single \texttt{c}-labeled source vertex; (ii) the $i$-th level has both a $0$-labeled vertex and a $1$-labeled vertex if $b[i] = 0$, otherwise (if $b[i] = 1$) it just has a $0$-labeled vertex. All vertices in each level have edges going to all vertices in the next level, and there are no edges between vertices of the same level (Figure~\ref{fig:gadgets}, right). This is the same type of gadget used also in~\cite{DBLP:conf/icalp/EquiGMT19}.

\begin{figure}[btp]
	\centering
	\includegraphics[scale=1]{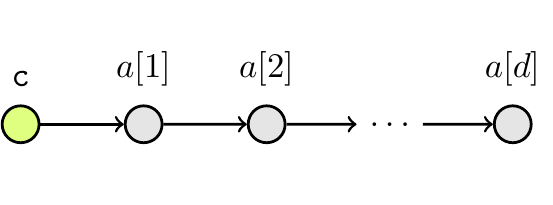}%
	\qquad
	\includegraphics[scale=1]{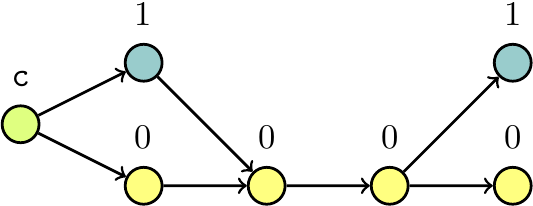}
	\caption{Gadget $G_a$ (left), with $a = (a[1], \dots, a[d]) \in A$, and 
	 $\overline{G}_b$ (right), with $b = ( 0, 1, 1, 0 ) \in B$. Note that since $b[2] = b[3] = 1$, the second and third levels of $\overline{G}_b$ only have one $0$-labeled vertex.}\label{fig:gadgets}
\end{figure}
\end{itemize}

To build up the intuition, take $a \in A$ and $b \in B$ and observe, similarly as in~\cite{DBLP:conf/icalp/EquiGMT19}, that the string spelled by $G_a$ has an occurrence in $\overline{G}_b$ if and only if $a$ and $b$ are orthogonal. Thus, the graph made up of a copy of $G_a$ and a copy of $\overline{G}_b$ has a longest repeated string of length $d + 1$ if and only if $a$ and $b$ are orthogonal. However, we cannot put together all gadgets $G_a$ and $\overline{G}_b$ as separate components of the same graph, because such a simple construction cannot restrict the location of occurrences of the longest repeated string. Intuitively, we need the longest repeated string to have one occurrence in the part of the graph corresponding to the $G_a$ gadgets, and one occurrence in the part of the graph corresponding to the $\overline{G}_b$ gadgets. We achieve this by (i) building a tree structure on top of the $\overline{G}_b$ gadgets that assigns to each gadget its own unique prefix; and by (ii) building a ``universal'' structure on top of gadgets $G_a$ to make them reachable by reading any possible prefix added to gadgets $\overline{G}_b$. More specifically, we introduce the following two gadgets with $\lceil \log_2 n \rceil + 1 = k + 1$ levels:

\begin{itemize}
	\item gadget $T$, seen in Figure~\ref{fig:gadgetstu} (left), is a complete binary tree of height $k + 1$ and with $2^{k} \ge n$ leaves, in which the root is \texttt{c}-labeled, all left children are $0$-labeled and all right children are $1$-labeled; trivially, any root-to-leaf path in such a tree has a different label; 
	\item a ``universal'' DAG $U$ with a \texttt{c}-labeled source followed by $k$ levels of vertices where each level has two vertices, labeled with $0$ and $1$, and each vertex in a level is connected to the vertices of the next level, as can be seen in Figure~\ref{fig:gadgetstu} (right).
\end{itemize}
\begin{figure}[btp]
	\centering
	\includegraphics[scale=1]{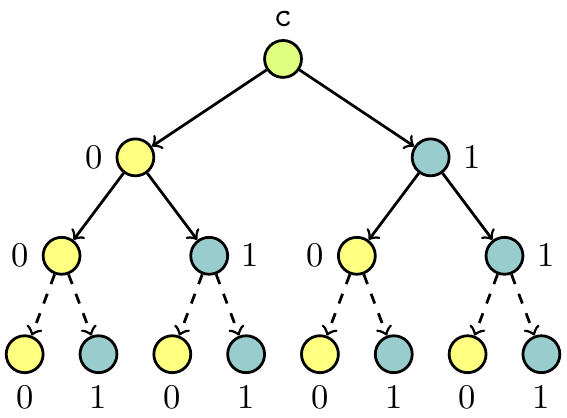}%
	\qquad\qquad%
	\includegraphics[scale=1]{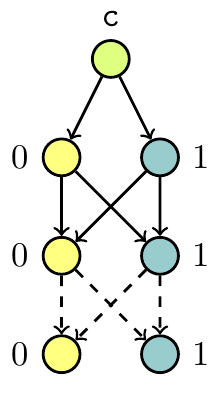}
	\caption{Gadget $T$ (left), a complete binary tree with $k+1$ levels, and gadget $U$ (right), reading every possible string of length $k+1$ that can be read in $T$.}\label{fig:gadgetstu}
\end{figure}

Our gadgets can be arranged in a \emph{non-deterministic} DAG as seen in Figure~\ref{fig:reduction1} (left): the two sinks of gadget $U$ are connected to each source of gadgets $G_a$, with $a \in A$, and each leaf of the tree gadget $T$ is connected to the source of a different gadget $\overline{G}_b$, with $b \in B$; if $n$ is not a power of two, some leaves of gadget $T$ can be left without any out-neighbors. To have a \emph{deterministic} DAG, we can further merge  all gadgets $G_a$ in a keyword tree (trie) $K_{G_{a_1}, \dots, G_{a_n}}$ (see Figure~\ref{fig:reduction1}, right), so the set of strings spelled by the entire graph is unchanged (and the leaves of the keyword tree remain all distinct since all vectors in $A$ are distinct). Note that the longest path in this graph has length $k + d + 1$, with $k = \lceil \log_2 n \rceil$.

\begin{figure}[t]
	\centering
	\includegraphics{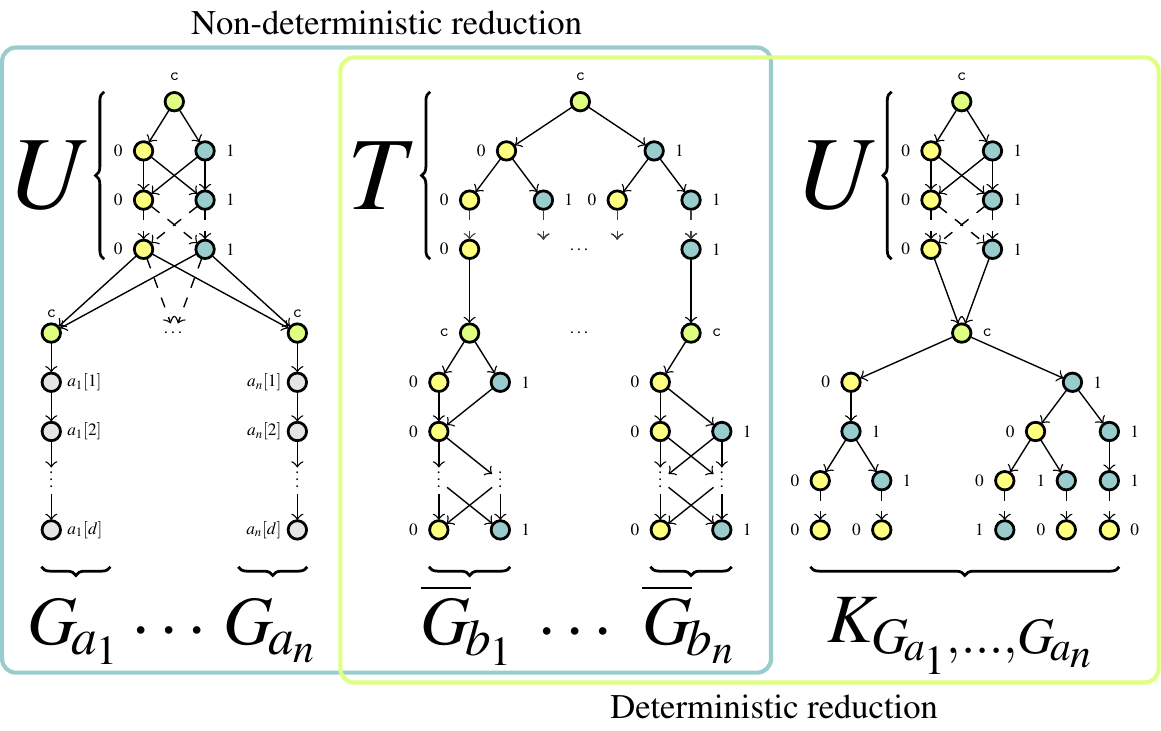}
	\caption{First scheme (left) for the OV reduction, made of two subgraphs $G_A$ (left) and $G_B$ (right): $G_A$ is non-deterministic; second scheme (right) for the OV reduction: $K_{G_{a_1}, \dots, G_{a_n}}$ is the keyword tree (trie) of gadgets $G_a$, $a \in A$.}\label{fig:reduction1}
\end{figure}

\begin{lemma}\label{lem:reduction2}
	For an instance $A$ and $B$ for \OVP, the deterministic DAG $G$ built as in Figure~\ref{fig:reduction1} has a repeated string of length $k + d + 2$, with $k = \lceil \log_2 n \rceil$, if and only if there exist $a \in A$, $b \in B$ orthogonal.
\end{lemma}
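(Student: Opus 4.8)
The plan is to establish the biconditional in Lemma~\ref{lem:reduction2} by characterizing exactly which strings of length $k+d+2$ can be repeated in $G$, using the structural properties of the gadgets $T$, $U$, $\overline{G}_b$, and $G_a$ described before the statement. The key observation is that the longest path in $G$ has length $k+d+1$, so a repeated string of length $k+d+2$ (i.e.\ with $k+d+2$ \emph{vertices}, hence maximum length) must be spelled by a source-to-sink walk. Since $G$ is a DAG with a controlled structure, I would first argue that any string of this maximal length decomposes uniquely as a prefix of length $k+1$ (spelled within $T$ on one side, within $U$ on the other) followed by a suffix of length $d+1$ (spelled in some $\overline{G}_b$ on one side, in some $G_a$ on the other). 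The $\mathtt{c}$-labeled sources force alignment: every such string begins with $\mathtt{c}$ and the branch structure of $T$ versus $U$ constrains where the two occurrences can live.

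**The forward direction** ($\Leftarrow$) is the constructive half. Given orthogonal $a \in A$, $b \in B$, I would invoke the earlier-stated fact that the string spelled by $G_a$ occurs in $\overline{G}_b$ precisely when $a \perp b$; this gives a suffix of length $d+1$ appearing in both the $G_a$-part and the $\overline{G}_b$-part. I then prepend a common prefix of length $k+1$: the root-to-leaf path in $T$ reaching the copy of $\overline{G}_b$ spells some binary string $w$ of length $k$ (after the initial $\mathtt{c}$), and by construction the universal gadget $U$ can spell \emph{any} such $w$ and then reach \emph{every} $G_a$ via its two sinks. Concatenating, the full string $\mathtt{c}\,w\,(\text{spelling of }G_a)$ of length $k+d+2$ has one occurrence passing through $T$ into $\overline{G}_b$ and a distinct occurrence passing through $U$ into $G_a$. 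The two occurrences are distinct because they traverse disjoint vertex sets (the $T/\overline{G}_b$ subgraph versus the $U/G_a$ subgraph), so this is a genuine repeated string of the target length.

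**The reverse direction** ($\Rightarrow$) is where I expect the main obstacle, since I must rule out spurious repeated strings of length $k+d+2$ that do \emph{not} encode an orthogonal pair. The crucial points are: (i) no two distinct root-to-leaf paths of $T$ spell the same string, so a maximal-length string cannot be repeated entirely within the $T/\overline{G}_b$ side using two different tree leaves; (ii) likewise the determinism of the trie $K_{G_{a_1},\dots,G_{a_n}}$ (the leaves remain distinct because all vectors in $A$ are distinct) prevents two distinct maximal occurrences from living entirely within the $U/G_a$ side; (iii) hence any repeated string of maximal length must have exactly one occurrence on each side. Given such a pair of occurrences, the length-$(d+1)$ suffix is simultaneously spelled by some $G_a$ and matched inside some $\overline{G}_b$, which by the earlier orthogonality characterization forces $a \cdot b = 0$. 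I would present this as showing that the two occurrences must cross the $T$/$U$ boundary in the prescribed way, and then reduce to the single-gadget fact.

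**The hard part will be** carefully handling the prefix region to guarantee that the two occurrences genuinely split across the two sides rather than both staying on one side --- this is exactly why the gadgets $T$ and $U$ are designed asymmetrically (one a tree giving unique prefixes, the other a universal DAG matching all prefixes). I would make this rigorous by arguing that the injectivity of $T$'s root-to-leaf labeling and the determinism of the final DAG together imply that each side admits at most one occurrence of any fixed maximal-length string, forcing the required cross-boundary pairing and thereby the orthogonal pair. The relabeling of $\mathtt{c}$ to $0$ noted earlier does not affect these counts, so it can be deferred to after the lemma.
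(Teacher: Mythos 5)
Your proposal is correct and follows essentially the same route as the paper's proof: both characterize the maximal-length strings as source-to-sink spellings of the form $\mathtt{c}S_1\mathtt{c}S_2$, use determinism of $G$ (equivalently, injectivity of $T$'s root-to-leaf labels together with determinism of the trie and of the $\overline{G}_b$ gadgets) plus the fact that there are only two sources to force exactly one occurrence on each side in the ($\Rightarrow$) direction, and construct the repeated string explicitly in the ($\Leftarrow$) direction via the unique $T$-path to $\overline{G}_b$ and the universality of $U$. The paper states the per-side uniqueness more tersely as a consequence of determinism, but the content is identical.
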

\begin{proof}
    By construction, the longest paths in $G$ have  length $k + d + 1$ and thus they spell strings of length $k + d + 2$. 
Moreover, all these longest strings are of the form $\mathtt{c}S_1\mathtt{c}S_2$, with $S_1 \in \lbrace 0, 1 \rbrace^k$ and $S_2 \in \lbrace 0, 1 \rbrace^d$.

	($\Rightarrow$) If there is a repeated string $\mathtt{c}S_1\mathtt{c}S_2$ of length $k + d + 2$ then there must be exactly two occurrences of it, one in $G_A$ and one in $G_B$, since the graph is deterministic and has only two sources from which the longest strings can be read. This implies the existence of $a' \in A$, $b' \in B$ such that $L(a') = S_2$ and $a' \cdot b' = 0$, due to trie $K_{G_{a_1}, \dots, G_{a_n}}$ and due to the properties of the longest strings of gadgets $G_{a'}$ and $\overline{G}_{b'}$.

	($\Leftarrow$) Given $a' \in A$, $b' \in B$ such that $a' \cdot b' = 0$, let $\mathtt{c}S_1$, with $S_1 \in \lbrace 0,1 \rbrace^k$, be the string corresponding to the unique path from the \texttt{c}-labeled source of $G_B$ to gadget $\overline{G}_{b'}$. Then string $\mathtt{c}S_1\mathtt{c}S_2$, with $S_2 \in \lbrace 0,1 \rbrace^d$ the linearization of vector $a'$, has two occurrences in $G$, one in $G_A$, passing through gadgets $U$ and $K_{G_{a_1}, \dots, G_{a_n}}$, and one in $G_B$, passing through $T$ and $\overline{G}_{b'}$.
\end{proof}
	To make the alphabet binary, it is easy to see that it suffices to relabel all \texttt{c}-labeled vertices with $0$.
	This proves that, under \OVH, there can be no truly sub-quadratic time algorithm.
\begin{theorem}
\label{thm:LRSPcomplexity}
	If \OVH\ holds, then for no $\varepsilon > 0$ there is a $O \big( \lvert V \rvert^{2 - \varepsilon} \big)$-time or $O \big( \lvert E \rvert^{2 - \varepsilon} \big)$-time algorithm for \LRSP, even when restricted to deterministic DAGs, labeled with a binary alphabet, in which both the maximum in-degree and out-degree of any vertex are at most 2.
\end{theorem}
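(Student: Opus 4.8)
The plan is to turn the correctness statement of \Cref{lem:reduction2} into a fine-grained reduction: since the combinatorial core is already established there, what remains is a size-and-time analysis of the construction together with a verification of the structural restrictions (determinism, binary alphabet, maximum degree $2$) asserted by the theorem, followed by a standard contradiction argument against \OVH.

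First I would fix an arbitrary \OVP\ instance $A, B \subseteq \lbrace 0,1 \rbrace^d$ with $\lvert A \rvert = \lvert B \rvert = n$ and $d = \omega(\log n)$, build the deterministic DAG $G$ exactly as in \Cref{fig:reduction1} (right), and relabel every $\mathtt{c}$-labeled vertex with $0$ to make the alphabet binary. I would then verify the degree bound gadget by gadget: in the trie $K_{G_{a_1}, \dots, G_{a_n}}$, the tree $T$, and the DAGs $U$ and $\overline{G}_b$, every vertex branches into at most the two symbols $0,1$ and receives at most two incoming edges, so the maximum in- and out-degree is $2$; the only vertex with incoming edges from two distinct gadgets is the root of $K$, reached by the two sinks of $U$, which still has in-degree $2$. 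Since every predecessor of a $\mathtt{c}$-vertex has that vertex as its unique out-neighbor, the relabeling creates no two equally labeled siblings, so determinism is preserved; moreover the maximum-length strings $\mathtt{c}S_1\mathtt{c}S_2$ of \Cref{lem:reduction2} simply become $0 S_1 0 S_2$, spelled by the same two source-to-sink occurrences, so by \Cref{lem:reduction2} the graph $G$ has a repeated string of length $k + d + 2$, with $k = \lceil \log_2 n \rceil$, if and only if $A$ and $B$ contain an orthogonal pair.

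Next I would bound the size of $G$: the tree $T$ has $O(2^{k+1}) = O(n)$ vertices and edges, the $n$ gadgets $\overline{G}_b$ and the trie $K$ each contribute $O(nd)$ vertices and edges, and the gadget $U$ contributes only $O(\log n)$, for a total of $\lvert V \rvert + \lvert E \rvert = O(nd)$; the construction can clearly be carried out in time linear in this size. Assuming for contradiction an $O\big(\lvert V \rvert^{2 - \varepsilon}\big)$-time or $O\big(\lvert E \rvert^{2 - \varepsilon}\big)$-time \LRSP\ algorithm on this graph class for some $\varepsilon > 0$, I would run it on $G$, read the length of the returned longest repeated string (finite, as $G$ is a DAG), and decide \OVP\ by testing whether this length equals $k + d + 2$. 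The total time would be $O(nd) + O\big((nd)^{2 - \varepsilon}\big) = O\big(n^{2 - \varepsilon} d^{2 - \varepsilon}\big) = O\big(n^{2 - \varepsilon}\poly(d)\big)$, contradicting \OVH.

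The gadget-by-gadget degree count and the size estimate are routine; the step demanding the most care is ensuring that the blow-up from $n$ to $\lvert V \rvert = \Theta(nd)$ costs only a $\poly(d)$ factor, so that the exponent of $n$ stays $2 - \varepsilon$ and the leftover $d^{2 - \varepsilon}$ is absorbed into the $\poly(d)$ slack permitted by \OVH. Alongside this, the only other delicate point is confirming that the binary relabeling neither breaks determinism nor manufactures a spurious maximum-length repeat in a non-orthogonal instance, which is exactly what the source-to-sink uniqueness argument above guarantees.
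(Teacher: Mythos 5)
Your proposal is correct and follows essentially the same route as the paper's own proof: invoke \Cref{lem:reduction2} for correctness, verify the degree-2 and determinism restrictions of the construction in \Cref{fig:reduction1}, relabel the $\mathtt{c}$-vertices with $0$ to get a binary alphabet, bound the size by $O(nd)$, and derive the contradiction with \OVH. Your treatment is in fact slightly more careful than the paper's on two points it dismisses as easy---checking that the relabeling preserves determinism and creates no spurious maximum-length repeats, and making explicit that $(nd)^{2-\varepsilon} = O(n^{2-\varepsilon}\poly(d))$.
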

\begin{proof}
    It is easy to check that the maximum in-degree, and out-degree of any vertex of the graph in the reduction of \Cref{fig:reduction1} is at most 2. Lemma~\ref{lem:reduction2} guarantees the correctness of the reduction, so it remains to analyze its complexity. The resulting graph $G$ has $O(nd)$ vertices and $O(nd)$ edges and can be constructed in $O(nd)$ time, since the keyword tree can be constructed in time linear in the size of its inputs. Thus, if \LRSP\ has an $O \big( \lvert V \rvert^{2 - \varepsilon} \big)$-time or an $O \big( \lvert E \rvert^{2 - \varepsilon} \big)$-time algorithm for some $\varepsilon > 0$, \OVP\ has an $O\big( (nd)^{2 - \varepsilon} \big)$-time algorithm, contradicting \OVH.
\end{proof}

Our \OVH\ reduction for \LRSP\ immediately proves an \OVH\ reduction also for \LCSP\ (by taking the two components of $G$ as input graphs $G_1$ and $G_2$ for \LCSP). 

It also provides a quadratic lower bound for an apparently simpler version of \MSP, which on two paths (i.e.~strings) can be solved in a trivial manner. Let \MSP$^\ast$ be defined as $\MSP$, with the difference that we are given a \emph{single} vertex $v_1$ of $G_1$, a \emph{single} vertex $v_2$ of $G_2$, and we need to compute the length of the longest string having an occurrence in $G_1$ starting at $v_1$ and an occurrence in $G_2$ starting at $v_2$. To obtain the \OVH\ reduction, it can be easily checked that it suffices to take as $G_1$ the subgraph of $G$ built from $B$ (\Cref{fig:reduction1}, middle) with $v_1$ being its source vertex, and as $G_2$ the graph build from $A$ (\Cref{fig:reduction1}, right) with $v_2$ being its source vertex.

\begin{corollary}
    \label{cor:MSP*}
    If \OVH\ holds, then for no $\varepsilon > 0$ there is a $O \big( ( \lvert V_1 \rvert \cdot \lvert V_2 \rvert )^{1 - \varepsilon} \big)$-time or $O \big( ( \lvert E_1 \rvert \cdot \lvert E_2 \rvert )^{1 - \varepsilon} \big)$-time algorithm for \MSP$^\ast$, even when both input graphs are deterministic DAGs, labeled with a binary alphabet, in which the maximum in-degree and out-degree of any vertex is at most 2.
\end{corollary}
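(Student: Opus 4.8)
The plan is to reuse the deterministic DAG $G$ from the reduction of \Cref{fig:reduction1} and \Cref{lem:reduction2} essentially verbatim, splitting it along its two connected components. Concretely, I would take as $G_1$ the subgraph $G_B$ (the tree gadget $T$ together with the gadgets $\overline{G}_b$, $b \in B$) with $v_1$ its unique \texttt{c}-labeled source, and as $G_2$ the subgraph $G_A$ (gadget $U$ together with the trie $K_{G_{a_1}, \dots, G_{a_n}}$) with $v_2$ its unique \texttt{c}-labeled source. Both are deterministic DAGs of maximum in- and out-degree at most $2$, and after relabeling every \texttt{c}-labeled vertex with $0$ (exactly as at the end of the reduction) both use the binary alphabet; each has $O(nd)$ vertices and edges and is constructible in $O(nd)$ time.

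The core step is to show that the longest string common to a walk of $G_1$ starting at $v_1$ and a walk of $G_2$ starting at $v_2$ has length $k+d+2$ if and only if $A$ and $B$ contain an orthogonal pair, where $k = \lceil \log_2 n \rceil$. Since both $G_A$ and $G_B$ are DAGs whose longest path from the respective source has length $k+d+1$, every such anchored common string has length at most $k+d+2$, so this value is always an upper bound. The remaining content is therefore exactly the statement that an anchored common string of length $k+d+2$ exists iff there is an orthogonal pair, and this is precisely the equivalence already established in \Cref{lem:reduction2}: there, every maximum-length repeated string of $G$ is forced to have exactly one occurrence in $G_A$ and one in $G_B$, each starting at the respective source, and such a configuration exists iff some $a \in A$, $b \in B$ are orthogonal. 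As $G_1 = G_B$ and $G_2 = G_A$ are now disjoint graphs anchored at those same sources, an occurrence in each constitutes a valid pair, so the two equivalences coincide.

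For the complexity conclusion, I would note that $\lvert V_1 \rvert$, $\lvert V_2 \rvert$, $\lvert E_1 \rvert$, $\lvert E_2 \rvert$ are all $O(nd)$. Hence a hypothetical $O\big( ( \lvert V_1 \rvert \cdot \lvert V_2 \rvert )^{1-\varepsilon} \big)$-time or $O\big( ( \lvert E_1 \rvert \cdot \lvert E_2 \rvert )^{1-\varepsilon} \big)$-time algorithm for \MSP$^\ast$ would decide the orthogonal-pair question in $O\big( (nd)^{2(1-\varepsilon)} \big) = O\big( n^{2-2\varepsilon} \poly(d) \big)$ time, contradicting \OVH\ for the constant $2\varepsilon > 0$.

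I expect the only delicate point to be the bookkeeping that justifies importing \Cref{lem:reduction2} wholesale: one must check that anchoring the search at the two unique sources loses nothing, since every maximum-length string in either component already starts at its source, and that the distinctness requirement of \LRSP\ plays no role here because occurrences now live in two separate graphs. Both facts are immediate from the structure of the gadgets, so no argument genuinely new beyond \Cref{lem:reduction2} is needed.
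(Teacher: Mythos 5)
Your proposal is correct and takes essentially the same route as the paper: the paper's proof also sets $G_1$ to be the subgraph built from $B$ (gadget $T$ with the $\overline{G}_b$'s) anchored at its source $v_1$, and $G_2$ the subgraph built from $A$ (gadget $U$ with the trie $K_{G_{a_1},\dots,G_{a_n}}$) anchored at its source $v_2$, and invokes \Cref{lem:reduction2} for the equivalence with \OVP. The additional bookkeeping you supply (the $k+d+2$ upper bound forcing full-length paths, distinctness being irrelevant across two disjoint graphs, and the $O\big(n^{2-2\varepsilon}\poly(d)\big)$ contradiction with \OVH) is exactly what the paper leaves as ``easily checked.''
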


Even though the above result about $\MSP^\ast$ holds also for deterministic DAGs, its hardness stems from the fact that we do not know which path in $G_1$ to match with a path in $G_2$ in order to maximise their length. However, if $G_2$ is just a path, then the problem is solvable in linear time.

\section{\LRSP\ on Undirected Graphs}
\label{sec:undirected}

In this section we study \LRSP\ on undirected graphs, namely when the occurrences of a repeated string are allowed to use an undirected edge in any of its two directions. Even if these results are easier than the results on directed graphs, they complete the complexity picture of \LRSP. We will study the variants of \LRSP\ on undirected paths and undirected walks, and consider the same classes of undirected graphs: paths, trees\footnote{By \emph{undirected tree} we mean an unrooted tree, where an occurrence can use an undirected edge in either direction.} and general graphs\footnote{For simplicity, we assume that self-loops in undirected graphs are not present, even if they do not change the results.}.

\begin{theorem}
\LRSP\ on undirected graphs and defined on path occurrences can be solved as follows:
\begin{enumerate}
		\item On an undirected graph $G$ that is a path, \LRSP\ can be solved in linear time. 
		\item On an undirected graph $G$ that is a tree, \LRSP\ can be solved in quadratic time. 
		\item On general undirected graphs, \LRSP\ is NP-complete, since the same reduction as in \Cref{thm:np-path} works also for undirected graphs.
	\end{enumerate}
\end{theorem}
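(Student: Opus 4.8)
The plan is to prove each of the three parts by reducing to the corresponding directed or string setting, or by a direct reduction for the NP-hardness.

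\medskip

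\noindent\textbf{Part 3 (general undirected graphs).} This is the easiest and I would dispatch it first. The reduction of \Cref{thm:np-path} takes a directed graph $G$, forms two copies of it, labels every vertex with the same character, and observes that a Hamiltonian path in $G$ corresponds to a repeated string of length $\lvert V \rvert$. Since every vertex carries the same label, the \emph{direction} of edges plays no role in the spelling argument: the correspondence between repeated strings of maximum length and Hamiltonian paths holds verbatim when we reduce from Hamiltonian Path on \emph{undirected} graphs instead. So I would simply note that undirected Hamiltonian Path is NP-complete and that the identical two-copy, single-character construction proves NP-hardness; membership in NP is immediate since a path occurrence has length at most $\lvert V \rvert$ and can be verified in polynomial time.

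\medskip

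\noindent\textbf{Part 1 (undirected path).} Here $G$ is a simple undirected path $v_1 - v_2 - \cdots - v_m$, and a path occurrence is a contiguous sub-path read in either direction. The key observation is that the set of strings spelled by path occurrences of $G$ is exactly the set of substrings of the string $W = L(v_1)L(v_2)\cdots L(v_m)$ together with the substrings of its reverse $W^{R}$. A longest repeated string must therefore be a longest substring that occurs at least twice across $W$ and $W^{R}$ combined (taking care that two occurrences must come from \emph{distinct} sub-paths of $G$, not merely the same sub-path read forwards and backwards, which would not give distinct vertex sequences unless the sub-path is a palindrome of length one). The plan is to build the suffix tree (or generalized suffix tree of $W$ and $W^{R}$, using a separator) in linear time under the integer-alphabet assumption, and find a longest repeated substring by locating the deepest internal node with occurrences originating from two distinct positions corresponding to distinct sub-paths. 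The only subtlety to check is the distinctness bookkeeping when the two occurrences correspond to the same physical sub-path read in opposite orientations; this is a minor case analysis, not a real obstacle.

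\medskip

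\noindent\textbf{Part 2 (undirected tree).} This is the main obstacle and where I expect the real work. An undirected tree $G$ admits, between any ordered pair of distinct vertices, a unique simple path, and a path occurrence is the spelling of such a path in either endpoint-to-endpoint direction. The natural strategy is a reduction to \LRSP\ on \emph{directed} trees, which is linear by the suffix tree of a tree / XBWT (as recorded in \Cref{tab:complexities}). I would root the undirected tree at an arbitrary vertex $r$ and orient all edges away from $r$, but this alone does not capture paths that go \emph{up} and then \emph{down} through a common ancestor. To capture all undirected simple paths I would, for each vertex $c$ that could serve as the ``turning point'' (lowest common ancestor) of a path, need to concatenate an upward branch with a downward branch. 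The clean way to obtain a quadratic bound is to reduce an undirected tree to a collection of directed-tree instances whose total size is $O(\lvert V \rvert^2)$: for instance, re-root the tree at each of the $\lvert V \rvert$ vertices, each time orienting edges away from the new root and solving the directed \LRSP; every undirected path appears as a directed root-to-node or node-to-node path in at least one such rooting. Summing the $\lvert V \rvert$ linear-time directed solves gives the claimed $O(\lvert V \rvert^2)$ time. The delicate points, which I expect to consume most of the proof, are (i) arguing that every undirected simple path of $G$ is spelled by a directed path in at least one of the rootings so that no candidate repeated string is missed, and (ii) ensuring the two occurrences witnessing a repeat remain \emph{distinct} as undirected paths when they are detected inside a single rooted directed instance. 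Handling the turning-point paths and the distinctness condition carefully is the crux; the complexity accounting itself is routine.
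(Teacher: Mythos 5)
Your Part 3 matches the paper exactly, and the core of your Part 1 (generalized suffix tree of the spelling $W$ and its reverse $W^R$, with a separator) is precisely the paper's own reduction, which uses the text $T\$T^{-1}$. The genuine problem is Part 2: solving directed \LRSP\ \emph{independently} on each of the $\lvert V\rvert$ rootings and taking the maximum is not a correct algorithm. Your completeness concern (i) only guarantees that every single occurrence survives in some rooting, but what is actually needed is that some \emph{pair} of distinct occurrences of a longest repeated string lies in one and the same rooted instance, and this can fail. Concretely, take the undirected tree that is a path on vertices $v_1,v_2,v_3$ labeled $\mathtt{a},\mathtt{b},\mathtt{a}$. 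By the paper's definition of distinctness (vertex sequences differing at some index), $(v_1,v_2,v_3)$ and $(v_3,v_2,v_1)$ are two distinct path occurrences of $\mathtt{aba}$, so the correct answer is $3$. However, the rootings at $v_1$ and at $v_3$ are directed paths spelling $\mathtt{aba}$ (whose longest repeated string is $\mathtt{a}$, length $1$), and the rooting at $v_2$ has longest repeated string $\mathtt{ba}$ (length $2$); your maximum is $2\neq 3$. In general, a palindromic repeat whose two occurrences are the same physical path read in opposite directions is invisible to every individual rooting, because the set of roots that make a path a downward path is disjoint from the set of roots that make its reversal downward. This is exactly why the paper does not solve the rootings separately: it glues \emph{all} rootings $T_{v_1},\dots,T_{v_n}$ into a \emph{single} directed tree $T'$ of size $O(n^2)$, attached below a directed path $(u_1,\dots,u_n)$ of $n$ new vertices labeled with a fresh symbol $\$\notin\Sigma$. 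The $\$$-prefix forces every repeated string of length greater than $n$ to have all its occurrences start at $u_1$, hence enter the copies at their roots; this yields both completeness (occurrences coming from different rootings, including reversed readings of the same path, can pair up inside $T'$) and soundness (root-starting paths in different copies start at different vertices of $T$, so the same undirected path --- which is downward in many rootings --- is never paired with itself, the failure mode of a naive disjoint union of the rootings).

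A smaller point: the distinctness caveat in your Part 1 is inverted. Under the paper's definition, a sub-path with at least two vertices read forwards and backwards gives two \emph{distinct} occurrences (the sequences differ already at index $0$); only a single-vertex occurrence coincides with its own reversal. So mirror-image pairs of length at least $2$ in $T\$T^{-1}$ are genuine repeats and must be counted --- the bookkeeping you propose would exclude them and would answer $1$ instead of $3$ on the $\mathtt{aba}$ example above. The only spurious repeats created by $T\$T^{-1}$ have length $1$, and these are already handled by the length-$1$ pre-check of \Cref{remark}.
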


\begin{proof}
    For 1., note that the occurrences of a repeated string are obtained by either moving only forward or only backward (because \LRSP\ is defined on paths, and thus occurrences cannot repeat vertices). Thus, if $T$ is the spelling of $G$ from one end to another, we can reduce \LRSP\ on $G$ to finding the longest repeated substring of the text $T\$T^{-1}$, where $\$$ is a new separator character, and $T^{-1}$ is $T$ reversed. Thus, we can solve \LRSP\ on $G$ in linear time.
    
    For 2., we show that \LRSP\ on an undirected tree with $n$ vertices, defined with path occurrences, can be reduced to \LRSP\ on a directed tree with $O(n^2)$ vertices (where there is no distinction between path and walk occurrences). Indeed, let $v_1,\dots,v_n$ be the vertices of a $\Sigma$-labeled undirected tree $T$. For each $i \in \{1,\dots,n\}$, construct the directed tree $T_{v_i}$ by setting $v_i$ as root and orienting all edges away from $v_i$. Also, let $(u_1,\dots,u_n)$ be a directed path of $n$ new vertices. Construct the directed tree $T'$, rooted at $u_1$, by combining the path $(u_1,\dots,u_n)$ with trees $T_{v_1},\dots,T_{v_n}$, adding the directed edges $(u_n,v_i)$, for all $i \in \{1,\dots,n\}$. The vertices of each $T_{v_i}$ are labeled as in $T$, and $u_1,\dots,u_n$ are labeled with a new character $\$ \notin \Sigma$. Clearly, the number of vertices of $T'$ is $n^2 + n$. We claim that $T$ has a longest repeated string of length $\ell$ if and only if $T'$ has a longest repeated string of length $n+\ell$, spelled by a path starting with $(u_1,\dots,u_n)$. This is proved by combining the following remarks:
    \begin{itemize}
        \item All occurrences of any repeated string of $T'$ longer than $n$ must start in the same vertex $u_i$, with $i \in \lbrace 1, \dots, n \rbrace$, since $\$ \notin \Sigma$. Moreover, if a repeated string of \emph{maximum} length in $T'$ has length greater than $n$ then its occurrences start at $u_1$.
        \item Consider two distinct occurrences of a longest repeated string in $T'$ of length $n + \ell$, with $\ell \ge 1$, both starting from $u_1$. By construction of $T'$, their suffixes of length $\ell$ correspond to two distinct occurrences of a string of length $\ell$ in $T$. Vice versa, given two distinct occurrences of a repeated string $w$ in $T$, there are two corresponding distinct occurrences of $\$^n w$ in $T'$ of length $n + \ell$, both starting from $u_1$.
    \end{itemize}
    Thus, the longest repeated strings in $T$ correspond to the longest repeated strings in $T'$ and vice versa (if there are no repeated strings in $T$ then the repeated strings of $T'$ have length lesser than $n$), so we can apply the linear-time solutions based on the suffix tree of a tree or the XBWT to obtain a globally quadratic-time algorithm.

    For 3., observe that the same reduction as in \Cref{thm:np-path} works also for undirected graphs (since the Hamiltonian path problem is NP-hard also on undirected graphs).
\end{proof}

For \LRSP\ defined on walk occurrences, observe first that we can replace each undirected edge with a pair of edges oriented in opposite directions. Thus, we can solve the problem in quadratic time, using the algorithm from \Cref{sec:generalgraphs} for directed graphs. However, we show that \LRSP\ can be solved in \emph{linear} time on general undirected graphs, using the following lemma, greatly simplifying the problem.

\begin{lemma}\label{lem:undirected}
	Given a $\Sigma$-labeled undirected graph $G = (V,E,L)$, $G$ has a repeated string of length at least 2 if and only if $G$ has an infinite repeated string.
\end{lemma}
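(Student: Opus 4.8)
The plan is to prove the two directions separately, with the backward implication being essentially immediate and the forward implication relying on a simple \emph{oscillation} trick that is available only in the undirected setting. The essential point that makes the statement true---and that fails for directed graphs, where a DAG may have long repeated strings yet no infinite ones---is that an undirected edge can be traversed back and forth, so that a single mismatched edge can be pumped into an infinite repetition.

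For the easy direction ($\Leftarrow$), I would start from an infinite repeated string $S \in \Sigma^\omega$ spelled by two distinct infinite walks $p = (p_0, p_1, \dots)$ and $q = (q_0, q_1, \dots)$. By definition of distinct walks there is a smallest index $i$ with $p_i \neq q_i$. If $i \ge 1$ I take the two prefixes $(p_0, \dots, p_i)$ and $(q_0, \dots, q_i)$, which differ at index $i$ and spell the common prefix $S[1]\cdots S[i+1]$ of length at least $2$; if $i = 0$ I instead take the length-$2$ prefixes $(p_0, p_1)$ and $(q_0, q_1)$, which already differ at index $0$ and spell $S[1]S[2]$. Either way this produces a finite repeated string of length at least $2$. (Note this direction does not use undirectedness.)

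For the main direction ($\Rightarrow$), the first step is to reduce to a repeated string of length exactly $2$. Given distinct walks $p$, $q$ spelling a common string of length at least $2$, I locate the smallest index $j$ where they differ. When $j \ge 1$ the pairs $(p_{j-1}, p_j)$ and $(q_{j-1}, q_j)$ satisfy $p_{j-1} = q_{j-1}$ but $p_j \neq q_j$; when $j = 0$ I use $(p_0, p_1)$ and $(q_0, q_1)$ with $p_0 \neq q_0$. In both cases I obtain two walks of length one that spell the same two-character string $ab \in \Sigma^2$ and that are distinct as ordered pairs of vertices. Hence there exist ordered pairs $(x, y) \neq (x', y')$ with $\{x, y\}, \{x', y'\} \in E$, $L(x) = L(x') = a$ and $L(y) = L(y') = b$. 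The second step is the oscillation: since $G$ is undirected, $W_1 \coloneqq (x, y, x, y, \dots)$ and $W_2 \coloneqq (x', y', x', y', \dots)$ are both valid infinite walks, and both spell the ultimately periodic string $(ab)^\omega$. Because $(x, y) \neq (x', y')$, the walks $W_1$ and $W_2$ differ at index $0$ (if $x \neq x'$) or at index $1$ (if $y \neq y'$), so they are distinct, and thus $(ab)^\omega$ is an infinite repeated string of $G$.

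I do not expect a genuine obstacle: the only subtle point is the first step of the forward direction, namely arguing that a length-$\ge 2$ repeated string always contains a length-$2$ repeated factor whose two occurrences correspond to \emph{distinct} directed edges. The divergence-index argument handles this cleanly, since at the first index of disagreement the two length-one sub-walks either share their source and differ in their target, or differ already in their source; everything else is the routine verification that the oscillating walks are valid and spell the same string.
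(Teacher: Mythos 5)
Your proof is correct and takes essentially the same approach as the paper: the paper's proof likewise picks an index $i$ where the two occurrences differ, splits into the cases $i = 0$ and $i > 0$, and pumps the corresponding single edge into the oscillating walks $(p_{i-1}, p_i)^\omega$ and $(p_{i-1}', p_i')^\omega$, which are distinct occurrences of $(L(p_{i-1})L(p_i))^\omega$. The only differences are cosmetic---you insist on the \emph{smallest} differing index (the paper uses any) and you write out the trivial ($\Leftarrow$) direction, which the paper omits.
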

\begin{proof}
	Let $p = (p_0, p_1, \dots, p_{k-1})$ and $p' = (p_0', p_1', \dots, p_{k-1}')$ be distinct occurrences of a string, such that $p_i \neq p_i'$ for some $0 \le i \le k - 1$. We can build an infinite repeated string with just two pair of adjacent vertices visited by $p$ and $p'$:
	\begin{itemize}
		\item if $i = 0$ then
			$(p_0, p_1)^\omega$, $(p_0', p_1')^\omega$
			are distinct occurrences of $(L(p_0) L(p_{1}))^\omega$;
		\item if $i >0$ then
			$(p_{i-1}, p_i)^\omega$, $(p_{i-1}', p_i')^\omega$
			are distinct occurrences of $(L(p_{i-1}) L(p_i))^\omega$.\qedhere
	\end{itemize}
\end{proof}

\Cref{lem:undirected} implies we can just check if there is a repeated string of length $2$ (in which case there is an infinite repeated string) and, if not, of length $1$. These two checks can be done in linear time, provided that the vertex set and the edge set of $G$ are already ordered in lexicographical ordering:
    there is a repeated string of length $2$ if and only if there are two distinct edges with matching edges, and
    there is a repeated string of length $1$ if and only if there are two distinct vertices with the same label.

\section{Conclusions and Future Work}\label{sec:conclusions}
In this paper we introduced the labeled direct product graph as a straightforward algorithmic tool, since it naturally encodes all pairs of walks in the original graphs having matching labels. Through simple applications, we developed optimal and predictable algorithms for existing problems on labeled graphs---string matching in labeled graphs (\SMLG) and longest common substring (\LCSP)---and for extensions of string problems that we introduced---matching statistics (\MSP) and longest repeated string (\LRSP). For \SMLG\ and \LCSP\ this resulted in more efficient algorithms than the existing quadratic-time ones, since the product graph excludes all pairs of mismatching vertices and edges.

Regarding complexity, we extended the existing conditional quadratic-time lower bounds for \SMLG\ of \cite{DBLP:conf/icalp/EquiGMT19,DBLP:conf/sosa/GibneyHT21} to \LRSP\ with a linear-time reduction, if the input graph of \SMLG\ is a deterministic DAG. Since the \SMLG\ lower bounds trivially hold for \LCSP\ and \MSP, this means that our algorithms (and the existing one for \LCSP) are conditionally optimal. Moreover, we designed a single, more efficient reduction from the Orthogonal Vectors Problem (\OVP) to \LCSP, \MSP\ and \LRSP\ proving that the three problems cannot be solved in truly sub-quadratic time under the Orthogonal Vectors Hypothesis (\OVH), even if the graphs in input are acyclic, deterministic (i.e.\ every vertex has at most one $a$-labeled out-neighbor, for every $a \in \Sigma$), labeled from a binary alphabet, and such that the maximum in-degree and out-degree of any vertex are at most 2. An interesting aspect of these results is that there is no known reduction of \SMLG\ to \LRSP\ when the problems are defined on strings and that the \OVP\ reduction holds also for the modification of \MSP\ trying to match the walks starting from just two vertices, even when the graphs have the same restrictions as before.

Our algorithms are based on linear-time analyses of the labeled direct product graph corresponding to each problem, so we spent some effort in studying its construction. Indeed, if the sets of vertices and edges of the graphs are sorted following the lexicographical order, then the construction of the product graph takes time and space linear in its size, thus under the standard assumption to work with an integer alphabet our algorithms globally reach this time and space complexity. This also means that the size of the labeled direct product graph is a tighter complexity upper bound for \SMLG, \LCSP, \MSP\ and \LRSP. Plus, the size of the product graph can be precomputed in time linear in the size of the input graphs, making it possible to report the run time of our algorithms before their computation. If the alphabet has constant size, there is no need to store the edges of the product graph, whereas if the alphabet is an integer one then the choice not to store the edges of the graph is a version of the \textsf{SetIntersection} problem, leading to a space and time trade-off.

Finally, we presented a complete complexity picture of \LCSP\ and \LRSP\ on different classes of directed graphs and we did the same for \LRSP\ on undirected graphs. The only open case is \LRSP\ defined on path occurrences (i.e.\ when there are no repeated vertices) in undirected trees, for which we obtained only a quadratic-time algorithm in \Cref{sec:undirected}, with no matching lower bound. Since the number of different paths of an undirected tree is only quadratic, we believe this problem cannot encode an \OVP\ instance.
Thus, we pose the open problem of finding a linear-time algorithm for this variant.

Recall that in the introduction we encoded a labeled graph as an NFA and we argued that \SMLG, \LCSP\ and \LRSP\ (where we focus on finite strings) are special cases of similarly defined problems for finite-state automata (over finite words). The quadratic-time conditional lower bounds automatically carry over to these problems, and as the classical quadratic-size construction of an NFA recognizing each and every word accepted by two input NFAs solves \LCSP, we deem that there is a quadratic-size NFA encoding all ambiguous words of any input NFA thus solving \LRSP, and we leave this for future work.

We also leave as future work to find more problems on labeled graphs solved by the labeled direct product graph, or that can be tackled with the same general strategy of precomputing a data structure to globally obtain time savings during the actual computation.

\bibliography{ms}{}
\bibliographystyle{plainurl}
\end{document}